\newtheorem{theorem}{Theorem}
\newtheoremstyle{break}
  {\topsep}{\topsep}%
  {\itshape}{}%
  {\bfseries}{}%
  {\newline}{}%
\theoremstyle{break}
\newtheoremstyle{break}
  {}
  {}
  {\itshape}
  {}
  {\bfseries}
  {.}
  {\newline}
  {}
\theoremstyle{break}
\begin{document}
%
\title{Dynamic Watermarking: Active Defense of Networked Cyber-Physical Systems}
%
%
%

\author{Bharadwaj Satchidanandan
        and~P.~R.~Kumar,~\IEEEmembership{Fellow~IEEE}
\thanks{This paper is partially based on work supported by NSF Science and Technology Center Grant CCF-0939370, NSF Contract No. CNS-1302182, the US Army Research Office under Contract No. W911NF-15-1-0279, and the AFOSR under Contract No. FA-9550-13-1-0008.}}
\maketitle

\begin{abstract}
The coming decades may see the large scale deployment of networked cyber-physical systems to address global needs in areas such as energy, water, healthcare, and transportation. However, as recent events have shown, such systems are vulnerable to cyber attacks. Being safety critical, their disruption or misbehavior can cause economic losses or injuries and loss of life. It is therefore important to secure such networked cyber-physical systems against attacks. In the absence of credible security guarantees, there will be resistance to the proliferation of cyber-physical systems, which are much needed to meet global needs in critical infrastructures and services.

This paper addresses the problem of secure control of networked cyber-physical systems. This problem is different from the problem of securing the communication network, since cyber-physical systems at their very essence need sensors and actuators that interface with the physical plant, and malicious agents may tamper with sensors or actuators, as recent attacks have shown. 

We consider physical plants that are being controlled by multiple actuators and sensors communicating over a network, where some sensors could be "malicious," meaning that they may not report the measurements that they observe. We address a general technique by which the actuators can detect the actions of malicious sensors in the system, and disable closed-loop control based on their information. This technique, called "watermarking," employs the technique of actuators injecting private excitation into the system which will reveal malicious tampering with signals. We show how such an active defense can be used to secure networked systems of sensors and actuators.
\end{abstract}

\begin{IEEEkeywords}
Dynamic Watermarking, Networked Cyber-Physical Systems, Networked Control Systems, Secure Control, Cyber-Physical Systems (CPS).
\end{IEEEkeywords}

%
\IEEEpeerreviewmaketitle

\section{Introduction}
%
%
%
%
\IEEEPARstart{T}he $21^{st}$ century could well be the era of large-scale system building. Such large-scale systems are envisioned to be formed by the interconnection of many embedded devices communicating with each other, and interacting with the physical world. Their operation requires tight integration of communication, control, and computation, and they have been termed broadly as Cyber-Physical Systems (CPS). The smart energy grid, intelligent transportation systems, internet of things, telesurgical systems, and robotics are examples of such cyber-physical systems.

While the importance and benefits of cyber-physical systems require no emphasis, their sustained proliferation is contingent on some key challenges being addressed, security being a primary one. Since CPSs have many applications in safety-critical scenarios, security breaches of these systems can have adverse consequences including economic loss, injury and death. 

There have been many instances of demonstrated attacks on cyber-physical systems in the recent past \cite{cardenas_challenges,sinopoli}. In Maroochy-Shire, Australia, in the year 2003, a disgruntled ex-employee of a sewage treatment corporation hacked into the computers controlling the sewage system and issued commands which led to a series of faults in the system \cite{maroochyShire, cardenas_challenges}. This is an insider attack, where the adversary has the necessary credentials to access and issue control commands to the system. We will return to this point shortly. Another example is the attack on computers controlling the Davis-Besse nuclear power plant in Ohio. In the year 2003, the Slammer worm, which infected about $75000$ hosts in the internet in under ten minutes, also infected the computers controlling the nuclear power plant, disabling the safety monitoring systems \cite{cardenas_challenges}. While the Slammer worm was not designed to target the nuclear power plant, the use of commodity IT software in control systems made them vulnerable to such attacks \cite{cardenas_challenges}. Another pertinent example is the Stuxnet worm which, in the year 2010, exploited a vulnerability in Microsoft Windows to subvert critical computers controlling centrifuges in Iran's uranium enrichment facility \cite{stuxnet}. Having subverted the computers, it issued control commands that caused the centrifuges to operate at abnormally high speeds, causing them to tear themselves apart. In order to keep the attacks undetected by software-implemented alarm routines and officials in the control room, Stuxnet recorded the sensor values in the facility for twenty-one seconds before carrying out each attack, and replayed those twenty-one seconds in a constant loop during the attack. Stuxnet has been claimed to be the first known digital weapon \cite{stuxnet}, and since then, cyberwarfare has emerged as a serious concern for cyber-physical systems due to the many advantages it offers to the attacker such as allowing it to remain anonymous, attack without geographical constraints, etc. Today, the resources required to carry out such attacks on critical infrastructures are generally available \cite{lbnl}, underlining the urgent need for the research community to pay attention to this problem. 

In this paper, we examine the problem of detecting attacks on networked cyber-physical systems. These systems can be thought of as having two layers- a physical layer, which consists of the plant, actuators, controllers, and sensors, which interact with physical signals, and a cyber layer, which networks the components of the physical layer. While securing the cyber layer is certainly of importance, by itself, it does not constitute the security of the cyber-physical system as a whole. The Maroochy-Shire incident is a classic illustration of this point, where the malfunctioning of the plant was not the result of an attack on the network layer, but of authorized individuals attacking the physical layer by the issue of improper control commands. 

At first look, it appears as though securing the physical layer is harder than securing the cyber layer. In the problem of network security, there is a clear distinction between an honest party and an adversary. Attributes such as credentials and cryptographic keys distinguish honest parties from adversaries. However, when it comes to securing the physical layer, no such demarcation exists. Any authorized party is also a potential adversary. 

However, we show in this paper that what works in favor of securing the physical layer, and what we exploit, is the fact that the actions of every node interfacing with the physical layer get transformed into physical signals, and these signals can be subject to scrutiny for semantic consistency. To elaborate, consider a physical system consisting of a plant, some actuators, and some sensors. If an actuator injects into the system a probing signal that is not disclosed to other nodes in the system, then, combined with the knowledge of the plant dynamics, the actuator expects the signals to appear in transformed ways at various points in the system. Based on the information that the actuator receives from the sensors about the signals at various points, it can potentially infer if there is malicious activity in the system or not. 

We develop these ideas to secure the physical layer of a noisy dynamical system. We examine a protocol whereby honest actuator nodes deliberately superimpose certain stochastically independent probing signals on top of the control law they are intended to apply. We then propose specific "tests" that these actuator nodes perform to infer malicious activity, and establish their effectiveness. As a particular illustration of the results, for example, we can show that even if all the sensors are malicious, and the actuators have absolutely no measurements that they can directly make and have to completely rely on the malicious sensors for \emph{all} their purported measurements, then even in such an adverse environment, the actuators can under appropriate conditions ensure that the additional distortion on performance that the malicious sensors can cause is of mean-square zero, if they are to remain undetected. Using this approach of active defense, we establish that under appropriate conditions, no matter in what way the adversarial sensors collude, the amount of distortion that they can add without exposing their presence can have an average power of only zero. 

The method we examine is a dynamic version of "watermarking," \cite{watermarking1} where certain indelible patterns are imprinted into a medium that can detect tampering \cite{watermarking1,watermarking2,watermarking3}. It shows how one can watermark dynamic signals so that one can detect malicious misbehavior on the part of sensors or actuators. On top of a secure communication system, it provides overall security to a cyber-physical system against malicious sensors and actuators.   

This paper is organized as follows. Section \ref{literature} describes prior work in this area. Section \ref{problem} provides a system-theoretic formulation of the problem. Section \ref{solution} describes our approach of active defense for networked cyber-physical systems. Section \ref{siso} opens by describing the method in the relatively simple context of a scalar linear Gaussian system and rigorously establishes the associated theoretical guarantees. Section \ref{sisoarx} treats the more general class of scalar auto-regressive systems with exogenous noise (ARX systems) that is Gaussian. Section \ref{ARMAX} extends these ideas to the more general ARMAX systems with arbitrary delay, a model that is frequently encountered in process control. Section \ref{partially_observed} deals with partially observed SISO systems with Gaussian process and measurement noise. Section \ref{mimo} considers multi-input, multi-output linear Gaussian systems in state-space form. Section \ref{nongaussian} describes how our results can potentially be extended to non-Gaussian systems. Section \ref{statTests} shows how the theoretical results lead to statistical tests that can be used to detect malicious behavior within a delay bound with a controlled false alarm rate. Section \ref{conclusion} provides some concluding remarks.

\section{Prior Work}\label{literature}
The vulnerability and the need to secure critical infrastructure from cyberattacks has been recognized at least as early as in 1997 \cite{PresidentsReport}. Subsequent reports \cite{Early1,Early2,Early3,Early4,Early5,Early6} cited demonstrated attacks, identifying potential threats, and analyzing the effects of successful attacks on specific systems. The large-scale replacement of proprietary control software and protocols by commodity IT software and protocols by the industry in order to allow for interoperability and rapid scalability has increased the vulnerability of Industrial Control Systems (ICS) to cyberattacks, and roadmaps were prepared to address security of control systems in various sectors such as the energy sector \cite{energysector}, water sector \cite{watersector}, chemical sector \cite{chemicalsector}, and the transportation sector \cite{transportationsector}. 

Some of the initial work on secure control \cite{cardenas1} has addressed the definition of what constitutes a secure control system. Certain key operational goals such as closed-loop stability and performance metric of interest are noted in \cite{cardenas1}, and it is proposed that a secure control system must achieve these operational goals even when under attack, or at least cause only a gradual degradation. It also identifies how the problem of secure control of networked systems departs from the traditional problems of network security and information security. In the former, authorized users or insiders can launch attacks on the system causing physical damage, as in the Maroochy-Shire incident. Hence, network and information security measures such as intrusion prevention and detection, authentication, access control, etc., fundamentally cannot address these attacks. Therefore, securing the network does not amount to securing the NCS. In this paper, we build a framework on top of a secure communication network to secure the NCS. There has been recent work showing how one indeed can build a communication network that provides provable guarantees on security, throughput as well as delays \cite{ponniah1}.

A theoretical study of secure control benefits from having a model for the adversary, and \cite{cardenas2} defines certain adversary and attack models. A popular attack on communication networks is the Denial-of-Service (DoS) attack, in which the adversary floods the communication network with useless packets, rendering it incapable of transporting useful information. Another attack is the deception attack, where an adversary impersonates as another node and transmits false information on its behalf. In the context of a networked control system, the adversary could employ DoS attack to prevent the controller and actuator from receiving the data required for their operation, and could employ deception to cause an actuator node to issue incorrect actuation signals. A framework to study the evolution of the physical process under Denial-Of-Service (DoS) and deception attacks is presented in \cite{cardenas2}. 

Classical estimation algorithms such as the Kalman filter assume perfect communication between the sensor and the controller. However, in the presence of an unreliable network, which may stem from either the characteristics of the network or from adversarial presence, certain packets may be dropped. This affects the performance of the estimation and control algorithm. Motivated by this scenario, a large body of research has been devoted to developing estimation and control algorithms for systems with intermittent observations, c.f. \cite{PacketLoss1,PacketLoss2,PacketLoss3,PacketLoss4,PacketLoss5} and references therein. Though this line of work does not explicitly model adversarial behavior, some of these ideas have been employed in the literature of secure control. For instance, \cite{amin2009safe} studies, using some of the machinery developed in \cite{PacketLoss1} in the context of control over lossy communication networks, the effect of DoS attack on the control performance. Also addressed in the literature is the effect of deception attacks on estimation and control performance. In \cite{Mo_sinopoli_deception}, the effect of false data fed by the compromised sensors to the state estimator is studied. The goal is to characterize the set of all estimation biases that an adversary can inject without being identified. A related work is \cite{vgupta}, where fundamental trade-offs between the detection probability (for a fixed false alarm rate) and the estimation error that the adversary can cause, are studied for single-input-single-output systems.

Several techniques have also been developed to counter attacks on cyber-physical systems. A technique for correct recovery of the state estimate in the presence of malicious sensors is presented in \cite{diggavi1}, and also a characterization of the number of malicious sensors that can be tolerated by the algorithm. A well-known attack on control systems is the replay attack, where the adversary records the sensor measurements for a fixed period of time and replays them during the attack so as to maintain the illusion of a normal operating condition. It was shown in \cite{sinopoli_replay} that only systems for which the matrix $(A+BL)(I-KC)$ is stable are susceptible to replay attacks, where $L$ is the feedback gain and $K$ is the steady-state Kalman gain of the system's Kalman filter. Consequently, for such systems, a method to secure the system from replay attack is presented in \cite{sinopoli_replay,sinopoli_replay2,physical_watermarking1,physical_watermarking2}. The fundamental idea of these is to inject into the actuation signal a component that is not known in advance. Specifically, \cite{sinopoli_replay,physical_watermarking1} consider the replay attack, employed in Stuxnet, and introduce a technique, termed Physical Watermarking, wherein the controller commands the actuators to inject into the system a component that is random and not known in advance in order to secure the system against such an attack. To the best of our knowledge, this is the first use of the idea of watermarking. It is shown that by employing Physical Watermarking, the covariance of the innovations process when the system is "healthy" and that when it is under attack are significantly different, enabling the estimator to detect the attack using a $\chi_2$ detector. This technique is extended in \cite{physical_watermarking2} to detect an adversary employing more intelligent attack strategies. Specifically, the adversary is assumed to possess a set of capabilities, based on which a specific attack strategy, consisting of the adversary generating false measurement values that are reported to the estimator, is identified. It is shown that Physical Watermarking can counter such an adversary. Including a random component in the actuation signal would clearly affect the running cost, and \cite{pappasReplay} develops an optimal policy to switch between cost-centric and security-centric controllers, by formulating the problem as a stochastic game between the system and the adversary. A method to detect false-data injection is presented in \cite{revealingStealthyAttacks}, where the focus is on zero-dynamics attacks, attacks which cannot be detected based on input and output measurements. A method to verify the measurements received from multiple sensors is presented in \cite{shield}, which exploits correlations between sensor measurements and other features to weed out the measurements from malicious sensors. Though not presented in the context of a dynamical system, the ideas presented in \cite{shield} could in principle be extended to incorporate system dynamics. 

At a high level, the techniques generally proposed in the literature for secure control, such as estimation with intermittent or incorrect observations, exploiting known correlations between sensor measurements to weed out suspicious measurements, or techniques inspired from fault-tolerant control, can be classified as passive techniques, meaning that they anticipate that the adversary will inject malicious signals into the system, and therefore employ a design that minimizes the damage that can be so caused. On the contrary, in this paper, we pursue active defense along the lines of \cite{sinopoli_replay, physical_watermarking1,physical_watermarking2}, an alternative approach for secure control in which, over and above the control-policy specified excitation, the system is excited in ways unknown to the adversary, thereby preventing the adversary from injecting malicious signals into the system. This constitutes a form of watermarking of a dynamic system \cite{watermarking1, watermarking2,watermarking3}. Since it does not distinguish between an adversarial node and a faulty node, this approach also falls under the purview of fault-tolerant control. Such a singular departure from passive approach is \cite{sinopoli_replay}, \cite{physical_watermarking1}, \cite{physical_watermarking2}, where it has been proposed to inject a signal to guard against adversarial attacks. This paper develops a comprehensive treatment of such dynamic watermarking. To the best of our knowledge, the general theory and techniques of active defense for \emph{arbitrary} attacks does not appear to have been studied in the literature thus far. That is what is developed in this paper.

\section{Problem Formulation}\label{problem}
Fig. \ref{fig_ncs} illustrates the basic architecture of a Networked Cyber-Physical System. At the heart of the system is a physical plant with $m$ inputs and $n$ outputs. Each input is controlled by an independent actuator, and each output is measured by an independent sensor. A known transformation capturing the dynamics of the physical plant maps the actuation signals applied by the actuators to outputs that are measured by the sensors. These measurements are communicated to entities called controllers through an underlying communication network. Each controller's job is to compute, in accordance with a control policy, the particular actuation signals that must be applied. The result of this computation is then communicated to the actuators through the communication network, which then apply the actuation signals. We assume that the network is complete, so that every node in the network can communicate with every other node. We use the term "nodes" generically to refer to any entity in the network. Therefore, in Fig. \ref{fig_ncs}, some nodes are actuators, some are sensors, some are controllers, and some could just be relays whose only job is to forward the information from one node to another. In this work, we assume that each node has both communication and computational capabilities, thereby allowing the controllers to be collocated with the actuators. 

However, certain sensors in the system could be "malicious" (suggestive of this, some nodes in Fig. \ref{fig_ncs} are marked in red), and the other nodes are said to be "honest." We further assume that the malicious nodes know the identity of all other malicious nodes in the system, allowing them to collude to achieve their objective, whereas the honest nodes don't know which of the other nodes are malicious or honest. A malicious sensor is a sensor node which does not report accurately the measurements that it observes. Rather, it reports a distorted version of the measurements. A malicious router node may not forward the packets that it receives, may forward packets that it does not receive (while claiming otherwise), alter packets before forwarding, introduce intentional delays, impersonate some other node in the system, etc. Therefore, the challenges of securing a Networked Cyber-Physical System are two-fold:
\begin{itemize}
\item[1)] Secure the cyber layer that comprises the communication network, ensuring confidentiality, integrity, and availability of network packets, and
\item[2)] Secure the sensors and actuators interfacing with the physical layer.
\end{itemize}
The former is achieved by a combination of traditional approaches such as cryptography and a more recent line of work reported in \cite{ponniah1, ponniah4, ihong}, while the latter is the subject of this paper.
\begin{figure}
\centering
\includegraphics[width=\columnwidth]{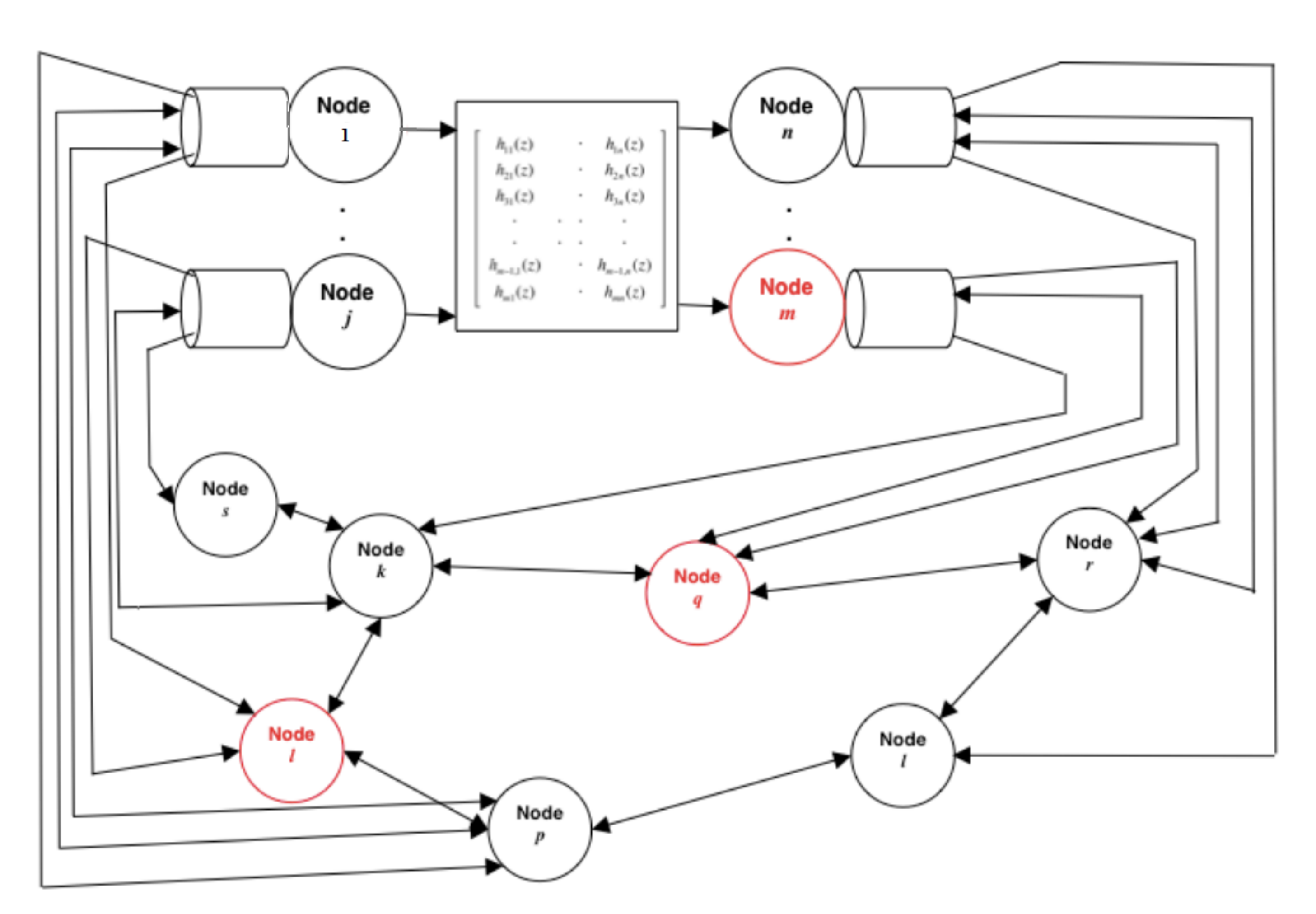}
\caption{A Networked Cyber-Physical System}\label{fig_ncs}
\end{figure}

Based on our assumption that the communication network is complete, and that the cyber layer has been secured, we suppose that every node in the network knows the identity of the node from which a packet that it receives originated, and knows if a packet that it receives was tampered with by any node along the route. Therefore, going forward, we abstract the cyber layer as consisting of secure, reliable, delay guaranteed bit pipes between any pair of nodes in the network. In particular, we imagine that there exists a secure, reliable, delay guaranteed bit pipe between any particular sensor and actuator node.

Finally, the plant that is being controlled is abstracted as a stochastic linear dynamical system described by time-invariant parameters. While any system that is of practical interest is most certainly non-linear, we focus on linear systems for two reasons. The first is that a linear system lends itself to tractable analysis, and enables one to separate the complexity arising out of the problem at hand from the complexity arising as a consequence of the system's non-linearity. Secondly, a theory developed for linear systems provides valuable insights and design principles that often transcend the particulars of the model and apply to a much broader class of systems. The wide applicability of Kalman's pioneering work on linear control systems \cite{Kalman} stands testimony to this fact. 

We are now in a position to state the problem in precise terms (with notation as indicated in the Appendix). Consider an $m\times n$ stochastic linear dynamical system of order $p$, described by 
\begin{align}
\mathbf{x}[t+1]=A\mathbf{x}[t]+B\mathbf{u}[t]+\mathbf{w}[t+1],\label{firsteq}
\end{align}
where $A\in \mathbb{R}^{p\times p}$, $B\in \mathbb{R}^{p\times m}$, $\mathbf{u}[t]$ is the input applied to the plant at time $t$, $\mathbf{\{w\}}$ is a sequence of independent and identically distributed (i.i.d.) Gaussian random vectors with zero mean and covariance martix $\sigma_w^2I$, independent of the initial state $\mathbf{x}_0$ of the system. 

While a malicious sensor can report different measurements to different actuators, the consistency of the reported measurements can be checked by allowing the actuators to exchange the reported measurements among themselves. This constrains the malicious sensors to report the same value to all honest nodes in the system. The sensor node $j$ reports to the honest nodes in the system the value $z_j[t]$ as the measurement that it observes at time $t$. We define $\mathbf{z}[t]\coloneqq [z_1[t] \;  z_2[t] \;  z_3[t] \cdots  z_n[t]]^T$. We will call $\mathbf{z}[t]$ for $t\geq 0$ the measurements \emph{reported} by the sensors. Note that the sensor $j$ is honest if $z_j[t]=x_j[t]$ $\forall t$. 

We assume that a control policy is in place, known to all nodes in the system, and allow for it to be history dependent, so that the $i^{th}$ input at time $t$, ${u_i^g}[t]$, dictated by the policy is 
\begin{equation}
{u_i^g}[t]=g_t^i(\mathbf{z}^{t}),\label{controlLaw}
\end{equation}
where $\mathbf{z}^{t}\coloneqq\{\mathbf{z}[0],\mathbf{z}[1],...,\mathbf{z}[t]\}$. We can suppose without loss of generality that the controller that computes this control law is collocated with the actuator node.

Our goal in this paper is to secure the control system by developing techniques that prevent the malicious nodes from causing excessive distortion if they are to remain undetected. We will suppose that the purpose of control law (\ref{controlLaw}) is to improve the regulation performance of the system with respect to the disturbances affecting it. For some of the performance results we will assume that the system is open-loop stable, and show that the adversarial nodes cannot affect the performance of the system without remaining undetected, and if detected then they can be disconnected, returning the system to stable behavior. The fundamental results characterizing what is the most that the adversarial nodes can do while remaining undetected, are applicable to all systems, stable or unstable. 

\section{Dynamic Watermarking: An Active Defense for Networked Cyber-Physical Systems}\label{solution}
The key idea which allows the honest nodes to detect the presence of malicious nodes in the system is the following. Let $g$ denote the control policy in place that specifies inputs to be applied in response to observed outputs. At each time instant $t$, an actuator node superimposes on its control policy-specified input $u_i^g[t]$, a random variable $e_i[t]$ that it draws independently from a specified distribution. Therefore, the input that actuator $i$ applies at time $t$ is 
\begin{equation}
u_i[t]=u_i^g[t]+e_i[t].
\end{equation}
This is illustrated in Fig. \ref{actuator_probing}. The random variables $e_0[t], e_1[t], e_2[t] \cdots$ are independent and identically distributed (i.i.d.), independent of the control policy-specified input. The distribution that they are chosen from is made public (i.e., made known to every node in the system), but the actual values of the excitation are not disclosed. In fact, this is how the honest actuators can check whether signals are being tampered with as they travel around the control loop. We refer to these random variables as an actuator node's \emph{privately imposed excitation}, since only that actuator node knows the actual realization of the sequence.

\begin{figure}
\begin{tikzpicture}

\draw[black, thick] (4,4) rectangle (7,8);
\draw[->, black, thick] (3,7.5) -- (4,7.5);
\draw[->, black, thick] (3,4.5) -- (4,4.5);
\draw (2,7.5) circle (1cm);
\draw (2.5,4.5) circle (0.5cm);
\draw[loosely dotted] (2.5,6.5) -- (2.5,5); 
\draw[black, thick] (2.5,7.5) -- (3,7.5);
\draw (2.25,7.5) circle (0.25cm);
\draw[->, black, thick] (1.5,7.5) -- (2,7.5);
\draw[->, black, thick] (2.25,8) -- (2.25,7.75);

\draw (8,7.5) circle (0.5cm);
\draw (8,4.5) circle (0.5cm);
\draw[->, black, thick] (7,7.5) -- (7.5,7.5);
\draw[->, black, thick] (7,4.5) -- (7.5,4.5);
\draw[loosely dotted] (8,7) -- (8,5);

\draw[black, thick] (2.25,7.65) -- (2.25,7.35);
\draw[black, thick] (2.1,7.5) -- (2.4,7.5);

\node[draw,align=center] at (3,10) {Private excitation $e_i[t]$ is\\ superimposed on the\\ control policy-specified input};
\draw[->, black, dotted] (3,9.2) -- (2.1,8.4);

\node[draw,align=center] at (7.3,10) {Actual actuation signal\\ contains private excitation\\ known only to actuator $i$};
\draw[->, black, dotted] (6,9.3) -- (3.1,7.6);

\node[align=center] at (5.5,6) {\textbf{PHYSICAL}\\ \textbf{PLANT}};

\node[align=center] at (8,7.5) {Node\\ s};
\node[align=center] at (8,4.5) {Node\\ n};
\node[align=center] at (2.5,4.5) {Node\\ j};
\node[align=center] at (2.15,8.2) {$e_i[t]$};
\node[align=center] at (1.4,7.7) {$u^g_i[t]$};
\node[align=center] at (3.35,7.3) {$u_i[t]$};

\end{tikzpicture}
\caption{The actuator node $i$ superimposes a private excitation whose realization is unknown to other nodes on to its control inputs}\label{actuator_probing}
\end{figure}
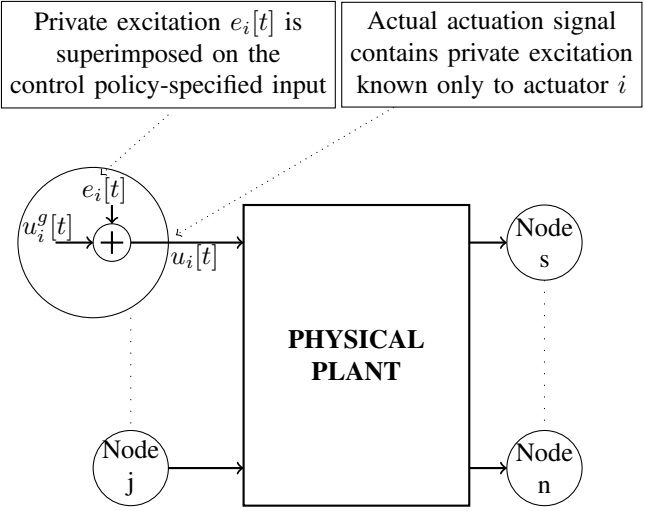


To see why private excitation helps, consider the example of a single-input-single-output (SISO) system where the sensor is malicious and the actuator is honest. Suppose that the control policy in place is $g = (g_1, g_2, \ldots , g_t, \ldots )$, where $g_t$ specifies the input to be applied at time $t$ in response to outputs up till that time. The actual outputs of the plant up to time $t$ are $x^t := (x[0], x[1], \ldots , x[t])$. However the outputs reported to it by the malicious sensor are $z^t := (z[0], z[1], \ldots , z[1])$, which may differ from $x^t$. The actuator therefore applies the control input $u[t] = g_t(z^t)$ at time $t$, without using any private excitation. Then, we have the system
\begin{align}
x[t+1]=ax[t]+bg_t(z^t)+w[t+1],
\end{align}
where $w[t]$ is the zero-mean process noise with variance $\sigma_w^2$.   

If the actuator does not superimpose a private excitation, the sensor knows, for each time $t$, the input $u[t]$ applied by the actuator. This is because it knows both the measurement sequence $z^t$ that it reported to the actuator as well as the control policy $\{g_1,g_2,...\}$ that the honest actuator has implemented. Hence, the malicious sensor can report a sequence of measurements $\{z[t]\}$ to the actuator \emph{without even "looking" at the output}, but by simply "simulating a linear stochastic system" after generating \emph{its own i.i.d. process noise $\{w'\}$} from the same distribution as that of $\{w\}$, as follows. 
\begin{align}
z(t+1)=az(t)+bg_t(z^t)+w'(t+1).
\end{align}
The actuator cannot detect that the sensor is malicious since the sequence $\{w'\}$, having been chosen from the same distribution as $\{w\}$, could have been the actual process noise. 

However, by superimposing a private excitation that is \emph{unknown} to the sensor, the actuator \emph{forces} the sensor to report measurements that are correlated with $\{e_i\}$, lest it be exposed, as we will show in the sequel. In the following sections, we prove that thereby constraining the sensor to report measurements that are correlated with the private excitation essentially limits the amount of distortion that the sensor can get away with while remaining undetected to be essentially zero in a mean-square sense. 

This active defense technique is similar in spirit to the technique of digital watermarking \cite{watermarking1} in electronic documents. Electronic documents can be easily transmitted and reproduced in large numbers. In doing so, the source of the document may be deleted, and can result in copyright violations. To protect the identity of its author, or any other information about the document, the electronic document is "watermarked" before being made available electronically. A digital watermark is a digital code that is robustly embedded in the original document \cite{watermarking2}. By robust, it is meant that the code cannot be destroyed without destroying the contents of the document. This code typically contains information about the document that needs to be preserved. Though preferable, it is not a requirement that the watermark be imperceptible. The only requirement is that it does not distort the actual contents beyond certain acceptable limits \cite{watermarking2}. Applications of digital watermarking also include data authentication, where fragile watermarks are used which get destroyed when the data is tampered with \cite{watermarking2}, and data monitoring and tracking.

This approach for secure control is analogous to digital watermarking. As we show in the subsequent sections, with regard to, for example, the above case where we considered a compromised sensor, injecting private excitation that is unknown to the sensor effectively "watermarks" the process noise, in the sense that the sensor cannot separate the private excitation from the process noise. Hence, any attempt on the part of the sensor to distort the process noise (which is the only component of the output unknown to the actuators) will also distort the "watermark," allowing the honest nodes to detect malicious activity. 

Specifically, given a general MIMO system of the form (\ref{firsteq}), we define $$\mathbf{v}[k]\coloneqq \mathbf{z}[k]-A\mathbf{z}[k-1]-B\mathbf{u}[k-1]-\mathbf{w}[k],$$ so that if the sensors truthfully report $\mathbf{z}[t]\equiv \mathbf{y}[t],$ then, $\mathbf{v}\equiv 0.$ As will be shown, the sequence $\{\mathbf{v}\}$ has the interpretation of the additive distortion introduced by the malicious sensors to the process noise. A similar definition can also be provided for a partially observed system, as we will show in Section \ref{partially_observed}. In this case, the sequence $\{\mathbf{v}\}$ has the interpretation of the additive distortion introduced by the malicious sensors to the innovations process. Now, based on $\{\mathbf{v}\}$, we define the following quantity.

\noindent\textbf{Definition:} (Additive distortion power of malicious sensors) We call $$\limsup_{T\to\infty} \frac{1}{T}\sum_{k=1}^{T} \|\mathbf{v}[k]\|^2$$ as the \emph{additive distortion power} of the malicious sensors.

The fundamental security guarantee provided by Dynamic Watermarking is that the additive distortion power is restricted to be zero if the malicious sensors are to remain undetected. We establish this in several linear control system contexts in this paper.

\section{Active Defense for Networked Cyber-Physical Systems: The SISO Case with Gaussian noise}\label{siso}
In this section, to illustrate the results in a simple context, we focus on single-input, single-output linear stochastic dynamical systems with Gaussian noise. The system is described by 
\begin{align}
x[t+1]=ax[t]+bu[t]+w[t+1], \label{siso_model}
\end{align}
where $a,b,x[t],u[t],w[t]\in\mathbb{R}$, with $\{w[t]\}$ being zero-mean i.i.d. Gaussian process noise of variance $\sigma_w^2$. The actuator wishes to implement a control law $\{g_t\}$, i.e., it wishes to implement $u[t]=g_t(x^t)$, where $x^t\coloneqq (x[0],x[1],\cdots,x[t])$. However, the actuator does not have access to $x^t$. It relies on a sensor that measures $x[t]$. However, since the sensor could be malicious, it reports measurements $z[t]$ to the actuator, where $z[t]$ could differ from $x[t]$. We consider an \emph{honest} actuator that is meant to, and implements the control policy $\{g\}$, but  adds a private excitation $\{e\}$ as a defense. Specifically, the actuator applies to the system the input
\begin{align}
u[t]=g_t(z^t)+e[t]. \label{siso_actuation}
\end{align}
Note that even though it implements the control policy $\{g\}$, the policy is applied to the measurements $z[t]$ reported by the sensor, which could differ from the true output $x[t]$. The private excitation $e[\cdot]$ added is independent and identically distributed (i.i.d.) and Gaussian of mean 0 and variance $\sigma_e^2$. 
Therefore, the system evolves in closed-loop as
\begin{align}
x[t+1]=ax[t]+bg_t(z^t)+be[t]+w[t+1].
\end{align}

We propose that the honest actuator perform certain "tests" to check if the sensor is malicious or not. Towards developing these tests, note that the actual sequence of states $\{x[t]\}$ of the system satisfies
\begin{align}
x[t+1]-ax[t]-bg_t(z^t)=be[t]+w[t+1].
\end{align}
Therefore, we have 
\begin{flalign}
&\{x[t+1]-ax[t]-bg_t(z^t)\}_t\sim \nonumber\\
&\;\;\;\;\;\;\;\;\;\;\;\;\;\;\;\;\;\;\;\;\;\;\;\;\;\;\;\;\;\;i.i.d.\; \mathcal{N}(0,b^2\sigma_e^2+\sigma_w^2),\label{test1_foundation}\\\text{and}\nonumber\\
&\{x[t+1]-ax[t]-bg_t(z^t)-be[t]\}_t\sim \nonumber\\
&\;\;\;\;\;\;\;\;\;\;\;\;\;\;\;\;\;\;\;\;\;\;\;\;\;\;\;\;\;\;\;\;\;\;\;\;\;\;\;\;\;\;\;\;\;\;\;\;\;i.i.d.\;\mathcal{N}(0,\sigma_w^2).\label{test2_foundation}
\end{flalign}

Based on the above observations, we propose that the actuator perform the following natural tests for variance to check if the sensor is honestly reporting $x[t]$. The actuator checks if the reported sequence $\{z[t]\}$ satisfies conditions (\ref{test1_foundation}) and (\ref{test2_foundation}), which the true output $\{x[t]\}$ would satisfy if the sensor were truthfully reporting $z[t]\equiv x[t]$. We write the tests in an asymptotic form below, as a test conducted over an infinite time interval. They can be reduced to statistical tests over a finite time interval in standard ways, which we elaborate on in Section-\ref{statTests}.  
\begin{itemize}
\item[1)] \textbf{Actuator Test 1:} Check if the reported sequence of measurements $\{z[t]\}$ satisfies
\begin{flalign}
\lim_{T\to\infty} \frac{1}{T} \sum_{k=0}^{T-1}(z[k+1]-az[k]-bg_k(z^k)-be[k])^2=\sigma_w^2.\label{test2}
\end{flalign}

\item[2)] \textbf{Actuator Test 2:} Check if the reported sequence of measurements $\{z[t]\}$ satisfies
\begin{flalign}
&\lim_{T\to\infty} \frac{1}{T} \sum_{k=0}^{T-1} (z[k+1]-az[k]-bg_k(z^k))^2\nonumber\\ 
&=(b^2\sigma_e^2+\sigma_w^2).\label{test1}
\end{flalign}
\end{itemize}

Define $$v[t+1]\coloneqq z[t+1]-az[t]-bg_t(z^t)-be[t]-w[t+1],$$ so that for an honest sensor which reports $z[t]\equiv x[t]$, $v[t]=0\;\; \forall t$. We term the quantity $$\lim_{T\to\infty} \frac{1}{T} \sum_{k=1}^{T} v^2[k]$$ the \emph{additive distortion power} of a malicious sensor for reasons explained later. The ensuing theorem proves that a malicious sensor with only zero additive distortion power can pass the above two tests, thereby remaining undetected.  

\begin{theorem}
If $\{z[t]\}$ passes tests (\ref{test2}) and (\ref{test1}), thereby remaining undetected, then, 
\begin{equation}
\lim_{T\to\infty} \frac{1}{T} \sum_{k=1}^{T} v^2[k]=0.\label{thm1a} 
\end{equation}

\end{theorem}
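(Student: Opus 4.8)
The plan is to rewrite both actuator tests purely in terms of the distortion process $\{v\}$, the private excitation $\{e\}$ and the noise $\{w\}$, to strip off the i.i.d. pieces using the strong law of large numbers, and then to exploit the fact that a malicious sensor watching only $\{x[t]\}$ cannot disentangle the watermark $be[k]$ from the noise $w[k+1]$.

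First I would set $s[k] := be[k]+w[k+1]$ and $\alpha_k := z[k+1]-az[k]-bg_k(z^k)$, so that $v[k+1]=\alpha_k-s[k]$, the summand of Test~1 in (\ref{test2}) equals $(v[k+1]+w[k+1])^2$, and the summand of Test~2 in (\ref{test1}) equals $(v[k+1]+be[k]+w[k+1])^2$. Expanding the squares and applying the strong law to the i.i.d. sequences $\{w[k+1]^2\}$, $\{e[k]^2\}$ and $\{e[k]w[k+1]\}$ (whose averages tend to $\sigma_w^2$, $\sigma_e^2$ and $0$), Test~1 forces $\frac1T\sum_{k=0}^{T-1}\big(v[k+1]^2+2v[k+1]w[k+1]\big)\to 0$, and then Test~2 forces in addition $\frac1T\sum_{k=0}^{T-1}v[k+1]e[k]\to 0$. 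A Cauchy--Schwarz estimate of the cross term $\frac1T\sum v[k+1]w[k+1]$ against the first relation also shows $\frac1T\sum_{k=0}^{T-1}v[k+1]^2$ stays bounded, hence $\sum_k v[k+1]^2/k^2<\infty$ a.s.

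The crux is the next step: I would argue that the sensor's report $z[k+1]$---however adaptively or randomly chosen---and hence $v[k+1]=z[k+1]-az[k]-bg_k(z^k)-s[k]$, is a function only of $x[0]$, of the sensor's own exogenous randomization, and of $s[0],\ldots,s[k]$, because from its observations $x^{k+1}$ and the public data $a,b,g$ it can recover $x[0]$ and each $s[j]=x[j+1]-ax[j]-bg_j(z^j)$ for $j\le k$, but can never separate $be[j]$ from $w[j+1]$. Since the pairs $(e[k],w[k+1])$ are jointly Gaussian and independent across $k$, I would write $e[k]=\rho\,s[k]+\eta[k]$ with $\rho=b\sigma_e^2/(b^2\sigma_e^2+\sigma_w^2)$ and $\{\eta[k]\}$ i.i.d. $\mathcal{N}(0,\sigma_e^2\sigma_w^2/(b^2\sigma_e^2+\sigma_w^2))$; each $\eta[k]$ is uncorrelated with every $s[j]$ and with every $\eta[j]$, $j\ne k$, and joint Gaussianity promotes this to full independence of $\{\eta[k]\}$ from $\sigma\big(x[0],\{s[j]\}_{j\ge0},\text{sensor coins}\big)$, so in particular $\eta[k]$ is independent of $v[j+1]$ for all $j$.

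Then $M_T:=\sum_{k=0}^{T-1}v[k+1]\eta[k]$ is a martingale relative to $\mathcal{F}_T:=\sigma\big(x[0],\text{sensor coins},s[0],\ldots,s[T-1],\eta[0],\ldots,\eta[T-2]\big)$, since $v[k+1]$ is $\mathcal{F}_k$-measurable and $E[\eta[k]\mid\mathcal{F}_k]=0$, with conditional increment second moment $\sigma_\eta^2 v[k+1]^2$; because $\sum_k v[k+1]^2/k^2<\infty$ a.s. from Step~1, the martingale strong law gives $\frac1T\sum_{k=0}^{T-1}v[k+1]\eta[k]\to 0$. Substituting $e[k]=\rho s[k]+\eta[k]$ into $\frac1T\sum v[k+1]e[k]\to 0$ yields $\frac1T\sum v[k+1]s[k]\to 0$ (using $\rho\ne 0$, i.e. $b\ne 0$, $\sigma_e^2>0$), hence $\frac1T\sum v[k+1]w[k+1]=\frac1T\sum v[k+1]s[k]-b\,\frac1T\sum v[k+1]e[k]\to 0$, and feeding this into the first relation of Step~1 gives $\frac1T\sum_{k=0}^{T-1}v[k+1]^2\to 0$, which is (\ref{thm1a}). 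The main obstacle is exactly the measurability claim of Step~2---formalizing that no sensor policy can make $v[k+1]$ depend on $e[k]$ except through $s[k]=be[k]+w[k+1]$---together with verifying the hypothesis of the martingale strong law, which is precisely what the a priori boundedness of $\frac1T\sum v^2$ (a by-product of the tests passing) supplies; the remainder is bookkeeping with the strong law.
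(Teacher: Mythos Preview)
Your argument is correct and follows essentially the same route as the paper: extract from the two tests the relations $\frac{1}{T}\sum(v^2+2vw)\to 0$ and $\frac{1}{T}\sum v\,e\to 0$, then use the Gaussian orthogonal decomposition of the pair $(e[k],w[k+1])$ together with a martingale argument to force $\frac{1}{T}\sum v\,w\to 0$. The paper decomposes $w=\widehat w+\widetilde w$ with $\widehat w=\beta s$ and invokes the Lai--Wei Martingale Stability Theorem to get $\sum v\widetilde w=o(\sum v^2)+O(1)$ directly, whereas you write the same splitting from the other side as $e=\rho s+\eta$ (indeed $\widetilde w[k+1]=-b\,\eta[k]$), first bound $\frac{1}{T}\sum v^2$ by Cauchy--Schwarz, and then apply the ordinary martingale strong law via $\sum v^2/k^2<\infty$; this is a slightly more elementary but equivalent packaging of the same idea. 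One small slip: with your filtration $\mathcal F_k=\sigma(x[0],\text{coins},s[0],\ldots,s[k-1],\eta[0],\ldots,\eta[k-2])$ the coefficient $v[k+1]$ is \emph{not} $\mathcal F_k$-measurable (it depends on $s[k]$); shifting to $\mathcal G_k:=\sigma(x[0],\text{coins},s[0],\ldots,s[k],\eta[0],\ldots,\eta[k-1])$ gives $v[k+1]\in\mathcal G_k$, $\eta[k]\perp\mathcal G_k$, and your martingale SLLN step then goes through verbatim.
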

\begin{proof}
Since $\{z\}$ satisfies (\ref{test2}), we have,
\begin{align}
&\lim_{T\to\infty} \frac{1}{T} \sum_{k=1}^{T} (v[k]+w[k])^2=\sigma_w^2. \label{t1use1}
\end{align}
Hence,
\begin{align}
&\lim_{T\to\infty} \frac{1}{T} \sum_{k=1}^{T} v^2[k]+2v[k]w[k]+w^2[k]=\sigma_w^2. \nonumber
\end{align}
Since $\lim_{T\to\infty} \frac{1}{T} \sum_{k=1}^{T} w^2[k]=E\{w^2[k]\}=\sigma_w^2$, we have
\begin{align}
\lim_{T\to\infty} \frac{1}{T} \sum_{k=1}^{T} v^2[k]+ \lim_{T\to\infty} \frac{1}{T} \sum_{k=1}^{T} 2v[k]w[k]=0. \label{t1use2}
\end{align}
Since $\{z\}$ also satisfies (\ref{test1}), we have,
\begin{align*}
&\lim_{T\to\infty} \frac{1}{T} \sum_{k=1}^{T} (v[k]+be[k-1]+w[k])^2=b^2\sigma_e^2+\sigma_w^2.
\end{align*}
So,
\begin{align*}
\lim_{T\to\infty} \frac{1}{T} \sum_{k=1}^{T} (v[k]+w[k])^2+ \lim_{T\to\infty} \frac{1}{T} \sum_{k=1}^{T} b^2e^2[k-1] \\+\lim_{T\to\infty} \frac{1}{T} \sum_{k=1}^{T} 2be[k-1](v[k]+w[k])=b^2\sigma_e^2+\sigma_w^2.
\end{align*}
Using (\ref{t1use1}), and the fact that $\{e\}$ has variance $\sigma_e^2$, the above reduces to
\begin{align*}
\lim_{T\to\infty} \frac{1}{T} \sum_{k=1}^{T} e[k-1](v[k]+w[k])=0.
\end{align*}
Invoking the fact that $e[k-1]$ and $w[k]$ are independent $\forall k$ and zero mean, it further reduces to
\begin{align}
\lim_{T\to\infty} \frac{1}{T} \sum_{k=1}^{T} e[k-1]v[k]=0. \label{key1}
\end{align}
The above equation implies that the sequence $\{v\}$, added by the sensor, must be empirically uncorrelated with the actuator's private noise sequence $\{e\}$. 

Let $\mathcal{S}_k\coloneqq \sigma (x^k,z^{k},e^{k-2})$, $\widehat{w}[k]\coloneqq E[w[k]\big | \mathcal{S}_k]$. Since $$w[k]=x[k]-ax[k-1]-bg_{k-1}(z^{k-1})-be[k-1],$$ we have $$(x^{k-2},e^{k-2})\to (x[k-1],x[k],z^{k})\to w[k]$$
forming a Markov chain. Consequently, $\widehat{w}[k]\coloneqq E[w[k]\big |\sigma(e^{k-2},x^{k-2},x[k-1],x[k],z^{k})]=E[w[k]\big |\sigma(x[k-1],x[k],z^{k})].$ Since $x[k]-ax[k-1]-bg_{k-1}(z^{k-1})$ (which is equal to $be[k-1]+w[k]$) is i.i.d. Gaussian for different $k$, we have \cite{KumarVaraiya}
\begin{equation}
\widehat{w}[k]=\frac{\sigma_w^2}{b^2\sigma_e^2+\sigma_w^2}(be[k-1]+w[k])=\beta (be[k-1]+w[k]),\label{use}
\end{equation}
where $\beta\coloneqq \frac{\sigma_w^2}{b^2\sigma_e^2+\sigma_w^2}<1$. 

Let $\widetilde{w}[k]\coloneqq w[k]-\widehat{w}[k]$. Then, $(\widetilde{w}[k-1],\mathcal{S}_k)$ is a Martingale difference sequence. This is because $\widetilde{w}[k-1]\in\mathcal{S}_k$, and
\begin{equation}
E[\widetilde{w}[k]\;\big | \mathcal{S}_{k}]=0.
\end{equation}
We also have $v[k]\in\mathcal{S}_k$ (in fact, $v[k]\in\sigma(x^k,z^{k})$). Hence, Martingale Stability Theorem (MST) \cite{MST} applies, and we have
\begin{equation}
\sum_{k=1}^{T} v[k]\widetilde{w}[k]=o(\sum_{k=1}^{T} v^2[k]) + O(1).
\end{equation}
Now,
\begin{align}
\sum_{k=1}^{T} v[k]w[k]&=\sum_{k=1}^{T} v[k](\widehat{w}[k]+\widetilde{w}[k])\nonumber\\
&=\sum_{k=1}^{T} v[k]\widehat{w}[k]+o(\sum_{k=1}^{T} v^2[k]) + O(1).\nonumber
\end{align}
Employing the specific form of the estimate (\ref{use}), we have from the above,
\begin{align}
\sum_{k=1}^{T}v[k]w[k]=&\beta b\sum_{k=1}^{T} v[k]e[k-1] +\beta \sum_{k=1}^{T} v[k]w[k] \nonumber\\
&+ o(\sum_{k=1}^{T} v^2[k]) + O(1).\nonumber
\end{align}
Hence,
\begin{align}
\sum_{k=1}^{T}v[k]w[k]=&\frac{\beta b}{1-\beta} \sum_{k=1}^{T} v[k]e[k-1]+o(\sum_{k=1}^{T} v^2[k])+ O(1).\nonumber
\end{align}
From (\ref{key1}), we have $\sum_{k=1}^{T} v[k]e[k-1]=o(T)$. It follows that
\begin{align}
&\sum_{k=1}^{T} v[k]w[k]=o(\sum_{k=1}^{T} v^2[k])+o(T)+O(1).
\end{align}
So,
\begin{align}
\sum_{k=1}^{T} v^2[k]+ \sum_{k=1}^{T} 2v[k]w[k]=(1+o(1))(\sum_{k=1}^{T} v^2[k]) &+ o(T) \nonumber\\
&+ O(1)\nonumber
\end{align}
Dividing the above equation by $T$, taking the limit as $T\to\infty$, and invoking (\ref{t1use2}) completes the proof. 
\end{proof}

\noindent\textbf{Remark:} Note that the only sources of uncertainty in the system are the initial state of the system $x[0]$ and the sequence of noise realizations $\{w[1],w[2],w[3],\cdots\}$. The sensor reporting a sequence of measurements is equivalent to it reporting a sequence of process noise realizations, since the actuator expects $z[t+1]-az[t]-bg_t(z^t)-be[t]$, which it can compute, to be equal to the process noise $w[t+1]$. From the definition of $v[t]$, we have 
\begin{equation*}
z[t+1]-az[t]-bg_t(z^t)-be[t]=w[t+1]+v[t+1].
\end{equation*}
The left hand side of the above equation can be computed by the actuator, and therefore, it can also compute the sequence $\{w+v\}$. What the theorem states is that a malicious sensor cannot distort the noise realization $\{w[1],w[2],w[3],\cdots\}$ beyond adding a zero-power sequence to it. Suppose the control law $g$ has been designed to provide good noise regulation performance of a stable system, then the performance of the system is good, subject only to the slightly increased cost of the private excitation of variance $\sigma_e^2$. As we discuss in Section \ref{statTests}, this can be reduced to a low enough value that permits detection of a malicious sensor within a specified delay with an acceptable level of false alarm probability. 

\begin{theorem} Suppose $|a|<1$, i.e., the system is stable.  
\begin{itemize}
\item[(i)] Define the distortion $d[t]\coloneqq z[t]-x[t]$. Then, 
\begin{equation*}
\lim_{T\to\infty} \frac{1}{T}\sum_{k=0}^{T-1} d^2[k]=0.
\end{equation*}

\item[(ii)] If the malicious sensor is to remain undetected, the mean-square performance of $x[t]$ is the same as the reported mean-square performance $z[t]$ that the actuator believes it is:
$$\lim_{T \to \infty} \frac{1}{T} \sum_{k=0}^{T-1} x^2[k] = \lim_{T \to \infty} \frac{1}{T} \sum_{k=0}^{T-1} z^2[k].$$  

\item[(iii)] Suppose the control law is $u^g(t) = fx(t)$ with $|a+bf| < 1$. The malicious sensor cannot compromise the performance of the system if it is to remain undetected, i.e., the mean-square performance of the system is $$ \lim_{T \to \infty} \frac{1}{T} \sum_{k=0}^{T-1} x^2[k] = \frac{ \sigma_w^2 + b^2 \sigma_e^2 }{1 - |a+bf|^2 }.$$
\end{itemize}
\end{theorem}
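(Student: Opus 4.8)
The plan is to treat the three parts in order, doing the substantive work in part (i) and obtaining (ii) and (iii) from it by soft arguments. For part (i) I would first derive a one-line recursion for $d[t]=z[t]-x[t]$: subtracting the closed-loop evolution $x[t+1]=ax[t]+bg_t(z^t)+be[t]+w[t+1]$ from the defining identity $z[t+1]=az[t]+bg_t(z^t)+be[t]+w[t+1]+v[t+1]$ gives $d[t+1]=a\,d[t]+v[t+1]$. Since $|a|<1$ this is a stable scalar filter, so $d[t]=a^{t}d[0]+\sum_{j=1}^{t}a^{t-j}v[j]$; applying Cauchy--Schwarz against the geometric weights (using $\sum_j|a|^{t-j}\le(1-|a|)^{-1}$) and swapping the order of summation yields $\frac1T\sum_{k=0}^{T-1}d^2[k]\le \frac{O(1)}{T}+\frac{2}{(1-|a|)^2}\cdot\frac1T\sum_{j=1}^{T}v^2[j]$. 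Theorem~1 applies because $\{z\}$ is assumed to pass both tests, so $\frac1T\sum_{j=1}^{T}v^2[j]\to 0$ and part (i) follows.

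For part (ii), writing $\|y\|_T:=\big(\frac1T\sum_{k=0}^{T-1}y^2[k]\big)^{1/2}$, the triangle (Minkowski) inequality gives $\big|\,\|x\|_T-\|z\|_T\,\big|\le \|x-z\|_T=\|d\|_T\to 0$ by part (i). Hence the two root-mean-square Cesàro averages have the same limit in $[0,\infty]$; in particular, whenever the reported performance $\lim_T\frac1T\sum_{k=0}^{T-1}z^2[k]$ exists, it equals $\lim_T\frac1T\sum_{k=0}^{T-1}x^2[k]$.

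For part (iii), I would substitute the linear law $g_t(z^t)=fz[t]$ (the actuator can only feed back the reported signal), so $u[t]=fz[t]+e[t]$ and, using $z[t]=x[t]+d[t]$, the closed loop becomes $x[t+1]=(a+bf)x[t]+bf\,d[t]+be[t]+w[t+1]$. Introduce the nominal process $\bar x[t+1]=(a+bf)\bar x[t]+be[t]+w[t+1]$ with $\bar x[0]=x[0]$ --- the same loop with an honest sensor. This is a stable AR(1) driven by the i.i.d.\ sequence $\xi[t+1]:=be[t]+w[t+1]\sim\mathcal N(0,b^2\sigma_e^2+\sigma_w^2)$; squaring and summing the recursion and using $\frac1T\sum\xi^2[k]\to b^2\sigma_e^2+\sigma_w^2$ together with $\frac1T\sum_t\bar x[t]\xi[t+1]\to 0$ (a martingale-difference sum, since $\bar x[t]$ is independent of $(e[t],w[t+1])$, handled by the same Martingale Stability Theorem used in Theorem~1 once $\frac1T\sum\bar x^2=O(1)$ is noted) gives $\frac1T\sum_{k=0}^{T-1}\bar x^2[k]\to\frac{b^2\sigma_e^2+\sigma_w^2}{1-|a+bf|^2}$. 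Finally $\delta[t]:=x[t]-\bar x[t]$ satisfies $\delta[t+1]=(a+bf)\delta[t]+bf\,d[t]$, $\delta[0]=0$ --- exactly the setting of part (i) with $a$ replaced by $a+bf$ and a forcing of vanishing Cesàro power --- so $\frac1T\sum\delta^2[k]\to 0$, and one more Minkowski step as in part (ii) gives $\|x\|_T\to\|\bar x\|_T\to\big(\frac{b^2\sigma_e^2+\sigma_w^2}{1-|a+bf|^2}\big)^{1/2}$, which is the claim.

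The routine ingredients are the stable-filter Cesàro estimates; the one spot that needs genuine care is the convergence $\frac1T\sum_k\bar x^2[k]\to\frac{b^2\sigma_e^2+\sigma_w^2}{1-|a+bf|^2}$, where one must control the boundary term $\bar x^2[T]/T\to0$ (which follows from $\xi^2[T]/T\to0$, itself a consequence of $\frac1T\sum\xi^2$ converging) and the martingale cross-term. Everything else is bookkeeping with geometric series, the strong law of large numbers, and the Martingale Stability Theorem already invoked in the proof of Theorem~1.
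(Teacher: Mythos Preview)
Your proof is correct. For part~(i), you and the paper both derive the recursion $d[t+1]=ad[t]+v[t+1]$ and exploit stability; you make the filter bound explicit via Cauchy--Schwarz against geometric weights, whereas the paper simply asserts that a stable filter driven by a zero-power input produces a zero-power output. For part~(ii), your Minkowski argument on the seminorm $\|\cdot\|_T$ is cleaner than the paper's device of bounding the cross-term $2z[k]d[k]$ by $\gamma^2 z^2[k]+\gamma^{-2}d^2[k]$ and sending $\gamma\to 0$, though the two are equivalent in content.

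The genuine difference is in part~(iii). The paper applies part~(ii) to transfer the question to $\{z\}$, observes that $z[t+1]=(a+bf)z[t]+be[t]+w[t+1]+v[t+1]$, and declares it ``immediate'' that the mean-square of $\{z\}$ equals $(\sigma_w^2+b^2\sigma_e^2)/(1-|a+bf|^2)$. You instead compare $\{x\}$ directly with the nominal honest-sensor closed loop $\{\bar x\}$, establish the mean-square limit for $\bar x$ from first principles (squaring the recursion, controlling the martingale cross-term and the boundary term $\bar x^2[T]/T$), and then show $\|x-\bar x\|_T\to 0$ by reusing the part-(i) estimate with $a+bf$ in place of $a$. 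Your route is more self-contained and spells out exactly the work the paper's ``immediate'' hides (the paper's $z$-recursion still contains the zero-power perturbation $v$, so the same stable-filter-plus-zero-power argument is needed there too); the paper's route is shorter once part~(ii) is available and one is willing to accept that step without further comment.
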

\begin{proof}
Note that
\begin{flalign*}
&d[k+1]\\
&=z[k+1]-x[k+1]\\
&=(az[k]+bg_k(z^k)+be[k]+w[k+1]+v[k+1])\\
&\;\;\;\;\;-(ax[k]+bg_k(z^k)+be[k]+w[k+1])\\
&=a(z[k]-x[k])+v[k+1]\\
&=ad[k]+v[k+1].
\end{flalign*}
The distortion experienced by the actuator can therefore be thought of as the output of a linear dynamical system driven by an input sequence $\{v[t]\}$ satisfying (\ref{thm1a}). Therefore,
\begin{flalign*}
&d[k]=\sum_{n=0}^{k-1} a^nv[k-n],
\end{flalign*}
where $\lim_{T\to\infty} \frac{1}{T} \sum_{p=1}^{T} v^2[p]=0.$

From the stability of $a$, it follows that
\begin{equation}
\lim_{T\to\infty} \frac{1}{T}\sum_{k=0}^{T-1} d^2[k]=0.
\end{equation}

Note that $x[k] = z[k] - d[k]$. Hence $x^2[k] = z^2[k] + d^2[k] - 2 (\gamma z[k])(\gamma^{-1} d[k])$. 
Now $|2 (\gamma z[k])(\gamma^{-1} d[k] )| \leq (\gamma^2 z^2[k]) + (\gamma^{-2} d^2[k] )$. Therefore, 
\begin{flalign*}
\lim_{T \to \infty} \frac{1}{T} \sum_{k=0}^{T-1} x^2[k] \leq &\lim_{T \to \infty} \frac{1}{T} \sum_{k=0}^{T-1} (1 + \gamma^2) z^2[k] \nonumber\\
&+ (1 + \gamma^{-2}) d^2[k].
\end{flalign*}
Since the result is true for any $\gamma >0$, taking the limit $\gamma \to 0$ and noting that the mean-square of $d$ is 0 gives
\begin{flalign}
\lim_{T \to \infty} \frac{1}{T} \sum_{k=0}^{T-1} x^2[k] \leq \lim_{T \to \infty} \frac{1}{T} \sum_{k=0}^{T-1} z^2[k].\label{t2bu1}
\end{flalign}
Similarly, since $z[k]=x[k]+d[k]$, we have $z^2[k]=x^2[k]+d^2[k]+2(\gamma x[k]) (\gamma^{-1}z[k]).$ Repeating the same argument as before, we have
\begin{flalign*}
\lim_{T \to \infty} \frac{1}{T} \sum_{k=0}^{T-1} z^2[k] \leq &\lim_{T \to \infty} \frac{1}{T} \sum_{k=0}^{T-1} (1 + \gamma^2) x^2[k] \nonumber\\
&+ (1 + \gamma^{-2}) d^2[k].
\end{flalign*}
Taking the limit as $\gamma\to 0$, and noting that mean-square of $d$ is 0, the above reduces to 
\begin{flalign}
\lim_{T \to \infty} \frac{1}{T} \sum_{k=0}^{T-1} z^2[k] \leq \lim_{T \to \infty} \frac{1}{T} \sum_{k=0}^{T-1} x^2[k].\label{t2bu2}
\end{flalign}
The second result follows from (\ref{t2bu1}) and (\ref{t2bu2}). 
The third result is immediate noting that the mean-square of $\{z\}$ converges to $\frac{ \sigma_w^2 + b^2 \sigma_e^2 }{1 - |a+bf|^2 }$.
\end{proof}

\section{Active Defense for Networked Cyber-Physical Systems: The SISO ARX Case}\label{sisoarx}
The results developed in the previous section can be extended to a more general ARX system model. Specifically, we consider a unit delay, strictly minimum phase, single-input, single-output system described by
\begin{align}
y[t+1]=-\sum_{m=0}^{p}a_my[t-m]+\sum_{r=0}^{h}b_ru[t-r]+w[t+1], \label{siso_arx_model}
\end{align}
where $a_m,b_r,y[t],u[t],w[t]\in\mathbb{R}$, $b_0\neq 0$, and with $\{w[t]\}$ being zero-mean i.i.d. Gaussian process noise of variance $\sigma_w^2$. Let $q^{-1}$ denote the backward shift operator. The above system can equivalently be expressed as 
\begin{align}
A(q^{-1})y[t]=q^{-1}B(q^{-1})u[t]+w[t],
\end{align}
where $A(q^{-1})\coloneqq 1+a_0q^{-1}+a_1q^{-2}+...+a_pq^{-(p+1)},$ and $B(q^{-1})\coloneqq b_0+b_1q^{-1}+...+b_hq^{-h},$ with $B(q^{-1})$ being strictly minimum phase, i.e., all its roots lie strictly outside the unit circle.  

We consider an honest actuator that is meant to, and implements the control policy $\{g\}$, and adds a private excitation $\{e\}$ as a defense. Unlike in the system considered in Section-\ref{siso}, the output of the ARX system at any particular time depends on the past inputs. Specifically, the output at any time instant contains contributions of private excitation injected in the past. Hence, simply injecting an i.i.d. sequence of Gaussian random variables will not result in an output distribution that is i.i.d. across time. 

However, since the actuator knows the past values of the private excitation, and also the transfer function of the system, it can perform "pre-equalization" by filtering the private excitation sequence before injecting it into the system. The filter, which we refer to as the pre-equalizer, has to be chosen in such a way that the component of the private excitation that appears in the output of the plant is an i.i.d. sequence of mean 0 and variance $b_0^2\sigma_e^2$. We let $$e'[t] \coloneqq - \frac{1}{b_0} (b_1 e'[t-1] + b_2 e'[t-2] + \ldots + b_h e'[t-h] ) + e[t],$$ where $e[t]$ is i.i.d. and Gaussian of mean 0 and variance $\sigma_e^2$. Since the system is strictly minimum phase, this is a stable generation of an excitation sequence \cite{KumarVaraiya}.

With a pre-equalizer in place, the effective input applied by the actuator is
\begin{align}
u[t]=g_t(z^t)+e'[t].
\end{align}
where $\{e'\}_t$ is the output of the pre-equalizer, and $z^t$ is the sequence of measurements reported by the sensor up to time $t$. Therefore, the system evolves in closed-loop as
\begin{align}
y[t+1]=-\sum_{m=0}^{p}a_my[t-m]+\sum_{r=0}^{h}b_rg_{t-r}(z^{t-r})\nonumber\\
+b_0e[t]+w[t+1],
\end{align}
where $\{e[t]\}$ is a sequence of i.i.d. Gaussian random variables with mean 0 and variance $\sigma_e^2$. Hence, from the point of view of actuator tests and associated results, the above problem is similar to that described in Section-\ref{siso}.

We propose that the actuator perform the following tests to check if the sensor is reporting the measurements honestly. 

\begin{itemize}
\item[1)] \textbf{Actuator Test 1:} Check if the reported sequence of measurements $\{z[t]\}$ satisfies
\begin{flalign}
\lim_{T\to\infty} \frac{1}{T} \sum_{k=0}^{T-1}(&z[k+1]+\sum_{m=0}^{p}a_mz[k-m]-\nonumber\\
&\sum_{r=0}^{h} b_rg_{k-r}(z^{k-r})-b_0e[k])^2=\sigma_w^2.\label{test2_arx}
\end{flalign}

\item[2)] \textbf{Actuator Test 2:} Check if the reported sequence of measurements $\{z[t]\}$ satisfies
\begin{flalign}
\lim_{T\to\infty} \frac{1}{T} \sum_{k=0}^{T-1}(&z[k+1]+\sum_{m=0}^{p}a_mz[k-m]-\nonumber\\
&\sum_{r=0}^{h} b_rg_{k-r}(z^{k-r}))^2=(b_0^2\sigma_e^2+\sigma_w^2).\label{test1_arx}
\end{flalign}
\end{itemize}
Consequently, Theorem 1 and Theorem 2 can be extended as follows.  
\begin{theorem}
\begin{itemize}
\item[1)] Let $v[t+1]\coloneqq z[t+1]+\sum_{m=0}^{p}a_mz[k-m]-\sum_{r=0}^{h} b_rg_{k-r}(z^{k-r})-b_0e[t]-w[t+1]$, so that for an honest sensor which reports $z[t]\equiv y[t]$, $v[t]=0\;\; \forall t$. If $\{z[t]\}$ passes tests (\ref{test2_arx}) and (\ref{test1_arx}), then, 
\begin{equation}
\lim_{T\to\infty} \frac{1}{T} \sum_{k=1}^{T} v^2[k]=0.\label{thm1a_arx} 
\end{equation}
\item[2)] If the malicious sensor is to remain undetected, the mean-square performance of $\{y\}$ is what the actuator believes it is, which is the mean-square performance of $\{z\}$. Hence if the control law was designed to provide a certain mean-square performance, then that is the value that is indeed attained.
\end{itemize}

\end{theorem}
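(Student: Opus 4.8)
The plan is to reduce both claims to the arguments already used for Theorems 1 and 2. For part 1), the key observation is that, because of the pre-equalizer, the closed-loop recursion $y[t+1]=-\sum_{m=0}^{p}a_my[t-m]+\sum_{r=0}^{h}b_rg_{t-r}(z^{t-r})+b_0e[t]+w[t+1]$ has exactly the structure exploited in Section \ref{siso}, with $b$ replaced by $b_0$, the single control term $bg_t(z^t)$ replaced by the (still $z^t$-measurable) quantity $\sum_{r=0}^{h}b_rg_{t-r}(z^{t-r})$, and the single output lag $x[t]$ replaced by the window $y[t],\ldots,y[t-p]$; in particular $y[k]+\sum_{m=0}^{p}a_my[k-1-m]-\sum_{r=0}^{h}b_rg_{k-1-r}(z^{k-1-r})$ equals $b_0e[k-1]+w[k]$, which is i.i.d.\ $\mathcal{N}(0,b_0^2\sigma_e^2+\sigma_w^2)$. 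I would then rewrite Tests (\ref{test2_arx}) and (\ref{test1_arx}) in terms of $v$, $w$ and $e$ exactly as in the proof of Theorem 1: Test (\ref{test2_arx}) gives $\lim\frac1T\sum(v[k]+w[k])^2=\sigma_w^2$ and Test (\ref{test1_arx}) gives $\lim\frac1T\sum(v[k]+b_0e[k-1]+w[k])^2=b_0^2\sigma_e^2+\sigma_w^2$; combining these with the strong law applied to $\{w^2[k]\}$, $\{e^2[k]\}$ and $\{e[k-1]w[k]\}$ (whose time-average limits are $\sigma_w^2$, $\sigma_e^2$ and $0$ since $e,w$ are i.i.d., zero-mean and independent) yields both $\lim\frac1T\sum v^2[k]+\lim\frac2T\sum v[k]w[k]=0$ and $\sum_{k=1}^{T}v[k]e[k-1]=o(T)$.

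Next I would introduce $\mathcal{S}_k\coloneqq\sigma(y^k,z^k,e^{k-2})$ and $\widehat{w}[k]\coloneqq E[w[k]\mid\mathcal{S}_k]$. Since the information $\mathcal{S}_k$ carries about $w[k]$ beyond the i.i.d.\ Gaussian innovation $b_0e[k-1]+w[k]$ passes only through the Markov chain $(y^{k-2},e^{k-2})\to(y[k-p-1],\ldots,y[k],z^k)\to w[k]$, Gaussian conditioning gives $\widehat{w}[k]=\beta_0(b_0e[k-1]+w[k])$ with $\beta_0\coloneqq\sigma_w^2/(b_0^2\sigma_e^2+\sigma_w^2)<1$. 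Writing $\widetilde{w}[k]\coloneqq w[k]-\widehat{w}[k]$, the pair $(\widetilde{w}[k-1],\mathcal{S}_k)$ is a martingale difference sequence and $v[k]\in\mathcal{S}_k$ (indeed $v[k]\in\sigma(y^k,z^k)$, as one sees by writing it through $z-y$ and its lags), so the Martingale Stability Theorem gives $\sum_{k=1}^{T}v[k]\widetilde{w}[k]=o(\sum_{k=1}^{T}v^2[k])+O(1)$. Substituting the explicit form of $\widehat{w}[k]$ and solving for $\sum v[k]w[k]$ exactly as in Theorem 1 gives $\sum_{k=1}^{T}v[k]w[k]=\frac{\beta_0b_0}{1-\beta_0}\sum_{k=1}^{T}v[k]e[k-1]+o(\sum_{k=1}^{T}v^2[k])+O(1)$, which is $o(\sum_{k=1}^{T}v^2[k])+o(T)+O(1)$; inserting this into $\sum v^2[k]+2\sum v[k]w[k]=o(T)$ and dividing by $T$ yields (\ref{thm1a_arx}).

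For part 2) I would set $d[t]\coloneqq z[t]-y[t]$ and subtract the closed-loop recursion for $y$ from the one the definition of $v$ gives for $z$; the common feedback term $\sum_r b_rg_{t-r}(z^{t-r})$, the common excitation $b_0e[t]$ and the common noise $w[t+1]$ all cancel, leaving $d[t+1]+\sum_{m=0}^{p}a_md[t-m]=v[t+1]$, i.e.\ $A(q^{-1})d[t]=v[t]$. Assuming, as in Theorem 2, that the system is stable (all roots of $A(q^{-1})$ strictly outside the unit circle), $\{d\}$ is the output of a stable filter with geometrically decaying impulse response driven by $\{v\}$, so a routine convolution estimate transfers the zero time-average power of $\{v^2\}$ from part 1) to $\lim\frac1T\sum_{k=0}^{T-1}d^2[k]=0$; the $\gamma$-trick of Theorem 2(ii) (bounding $\pm2(\gamma z)(\gamma^{-1}d)$ by $\gamma^2z^2+\gamma^{-2}d^2$, taking time averages, and letting $\gamma\downarrow0$ in both directions) then gives $\lim\frac1T\sum y^2[k]=\lim\frac1T\sum z^2[k]$. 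Finally, since $\{z\}$ obeys precisely the nominal closed loop perturbed only by the zero-power input $\{v\}$, $\{z\}$ attains the mean-square value the control law was designed for, and hence so does $\{y\}$. The step I expect to be the real obstacle is the identification $\widehat{w}[k]=\beta_0(b_0e[k-1]+w[k])$: as in Theorem 1 one must check carefully that $b_0e[k-1]+w[k]$ is $\mathcal{S}_k$-measurable (it is a combination of $y^k$ and $z^{k-1}$) and that conditioning additionally on $(y^{k-2},e^{k-2})$ — which are functions of $x[0]$, $w[1],\ldots,w[k-1]$, $e[0],\ldots,e[k-2]$ and the sensor's private-excitation-blind reporting rule — adds no further information about $w[k]$; the extra output lags of the ARX model only make this bookkeeping heavier. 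The one genuinely new hypothesis worth flagging is the stability of $A(q^{-1})$ (the analogue of $|a|<1$), without which a zero-power $\{v\}$ could be amplified by $1/A(q^{-1})$ into a $\{d\}$ of positive, even infinite, power.
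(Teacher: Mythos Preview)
Your proposal is correct and takes essentially the same approach as the paper: the paper omits the proof entirely, stating only that it ``follows the same sequence of arguments as the proof of Theorem 1,'' and your write-up is precisely that reduction carried out in detail, with the appropriate replacements $b\mapsto b_0$, the single lag $ax[t]$ by the ARX lag polynomial, and the filtration $\mathcal{S}_k=\sigma(y^k,z^k,e^{k-2})$ in place of $\sigma(x^k,z^k,e^{k-2})$. Your derivation of $A(q^{-1})d[t]=v[t]$ for part 2) and the explicit flagging of the stability hypothesis on $A(q^{-1})$ are exactly the ARX analogues of Theorem 2, and in fact supply more detail than the paper itself.
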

\begin{proof}
Omitted, since the proof follows the same sequence of arguments as the proof of Theorem 1. 
\end{proof}

In the following section, we extend the technique to systems with arbitrary delay and colored noise, i.e., the more general ARMAX model, see \cite{ast70}.

\section{Active Defense for Networked Cyber-Physical Systems: The SISO ARMAX Case}\label{ARMAX}
A general ARMAX system with arbitrary but finite delay is a model that is encountered often in process control. In this section, we develop a Dynamic Watermarking scheme to secure such systems. We show that by commanding the actuator to inject private excitation whose spectrum is matched to that of the colored process noise entering the system, it can be ensured that a malicious sensor is constrained to distorting the process noise, the only quantity unknown to the actuator, by at most a zero-power signal. This amounts to a form of Internal Model Principle \cite{internal_model} for Dynamic Watermarking, and is a phenomenon that doesn't emerge in the analysis of a simple SISO or an ARX system treated in the previous sections. 

A general ARMAX system with finite delay $l$ is described by
\begin{align}
y[t]=-\sum_{k=1}^{p} a_ky[t-k]+\sum_{k=0}^{h}b_ku[t-l-k]\nonumber\\
+\sum_{k=0}^{r}c_kw[t-k].\label{ARMAX_model}
\end{align}
Without loss of generality, we assume that $c_0=1$. Let $q^{-1}$ be the backward shift operator, so that $q^{-1}y[t]=y[t-1].$ Then, the above system can be expressed as
\begin{align}
A(q^{-1})y[t]=q^{-l}B(q^{-1})u[t]+C(q^{-1})w[t],
\end{align}
where $a_0\coloneqq 1$, $A(q^{-1})\coloneqq a_0+a_1q^{-1}+...+a_pq^{-p}$, $B(q^{-1})\coloneqq b_0+b_1q^{-1}+...+b_hq^{-h}$, $C(q^{-1})\coloneqq c_0+c_1q^{-1}+...+c_rq^{-r}$, and $B(q^{-1})$ and $C(q^{-1})$ are assumed to be strictly minimum phase. To secure the system, the actuator applies the control
\begin{align}
u[t]=u^g[t]+B^{-1}(q^{-1})C(q^{-1})e[t],\label{input_ARMAX}
\end{align}
where $u^g[t]$ is the control policy-specified input sequence and $\{e\}$ is a sequence of i.i.d Gaussian random variables of zero mean and variance $\sigma_e^2$, denoting the actuator node's private excitation. Since $B(q^{-1})$ is assumed to be minimum phase, (\ref{input_ARMAX}) is a stable generation of the control input $u[t]$. Consequently, the output of the system obeys
\begin{align}
A(q^{-1})y[t]=q^{-l}B(q^{-1})u^g[t]+q^{-l}C(q^{-1})e[t]\nonumber\\
+C(q^{-1})w[t],
\end{align}
implying that
\begin{align}
y[t]=-\sum_{k=1}^{p} a_ky[t-k]+\sum_{k=0}^{h}b_ku^g[t-l-k]\nonumber\\
+\sum_{k=0}^{r}c_ke[t-l-k]+\sum_{k=0}^{r}c_kw[t-k].
\end{align}
While the $q-$domain derivation is just formal, the above difference equation can be obtained rigorously.  

Define $\lambda[t]\coloneqq e[t-l]+w[t]$. Then, the output of the system can be expressed as
\begin{align}
y[t]=-\sum_{k=1}^{p} a_ky[t-k]+\sum_{k=0}^{h}b_ku^g[t-l-k]\nonumber\\
+\sum_{k=0}^{r}c_k\lambda[t-k].
\end{align}
We now develop tests that the actuator should perform to detect maliciousness of the sensor. The fundamental idea behind the tests is to check if the prediction-error of the reported sequence $\{z\}$ has the appropriate statistics. Specifically, the actuator processes the reported measurements with a prediction-error filter defined through the following recursion for all $t\geq 0$.
\begin{align}
{z}_{t|t-1}=&-\sum_{k=1}^{p}a_k{z}[t-k]+\sum_{k=0}^{h}b_ku^g[t-l-k]\nonumber\\
&+\sum_{k=1}^{r}c_k\widetilde{z}[t-k],\label{pred_error_1}\\
&\widetilde{z}[t]=z[t]-z_{t|t-1}.\label{pred_error_2}
\end{align}
The filter is assumed to be initialized with $\widetilde{z}[-k]=\lambda[-k], k\in\{-1,-2,...,-r\}$. It is easy to verify that the above filter produces $\widetilde{z}[t]\equiv \lambda[t]$ if $z[t]\equiv y[t]$. Based on this observation, we propose that the actuator perform the following tests.

\begin{enumerate}
\item \textbf{Actuator Test 1:} The actuator checks if
\begin{align}
\lim_{T\to\infty} \frac{1}{T} \sum_{k=0}^{T-1} (\widetilde{z}[k]-e[k-l])^2=\sigma_w^2.\label{ARMAX_test}
\end{align}
\item \textbf{Actuator Test 2:} The actuator checks if
\begin{align}
\lim_{T\to\infty} \frac{1}{T} \sum_{k=0}^{T-1} \widetilde{z}^2[k]=\sigma_w^2+\sigma_e^2.\label{ARMAX_test2}
\end{align}
\end{enumerate}
The following theorem shows that the above tests suffice to ensure that a malicious sensor cannot introduce any distortion beyond addition of a zero-power signal to the process noise. i.e., a malicious sensor of only zero additive distortion power can pass the above tests to remain undetected. 

\begin{theorem}
Define $v[t]\coloneqq \sum_{k=0}^{p}a_kz[t-k]-\sum_{k=0}^{h}b_ku^g[t-l-k]-\sum_{k=0}^{r}c_k\lambda[t-k]$, so that for an honest sensor reporting $z\equiv y$, $v\equiv 0$. If the sensor passes tests (\ref{ARMAX_test}) and (\ref{ARMAX_test2}), then, $$\lim_{T\to\infty} \frac{1}{T}\sum_{k=0}^{T-1}v^2[k]=0.$$
\end{theorem}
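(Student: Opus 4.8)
The plan is to mirror the argument of Theorem 1, now carried out in the prediction-error coordinates $\widetilde{z}[t]$ rather than in the raw output. First I would relate the quantity $v[t]$ to the prediction error: from the recursion \eqref{pred_error_1}--\eqref{pred_error_2} one checks that $\widetilde{z}[t] - \lambda[t]$ is, up to the known whitening filter $C^{-1}(q^{-1})$ applied to $v$, exactly the distortion the sensor has injected; more precisely I would establish the identity $C(q^{-1})(\widetilde{z}[t]-\lambda[t]) = v[t]$, so that $\widetilde{z}[t] = \lambda[t] + C^{-1}(q^{-1})v[t]$. Write $\bar v[t] \coloneqq C^{-1}(q^{-1})v[t]$ for the filtered distortion; since $C$ is strictly minimum phase this is a stable, causally invertible transformation, and it suffices to prove $\frac1T\sum \bar v^2[k]\to 0$, whence $\frac1T\sum v^2[k]\to 0$ follows because $v = C(q^{-1})\bar v$ and a bounded-coefficient FIR filter cannot inflate a zero-power signal to positive power (Cauchy–Schwarz on the convolution). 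I would record this equivalence as the first step.

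Next, exactly as in Theorem 1, I would extract the empirical-orthogonality consequences of the two tests. Test 2, \eqref{ARMAX_test2}, says $\frac1T\sum \widetilde{z}^2[k]\to \sigma_w^2+\sigma_e^2$; substituting $\widetilde{z}[k]=\lambda[k]+\bar v[k]=e[k-l]+w[k]+\bar v[k]$, expanding the square, and using that $\{e\}$, $\{w\}$ are independent zero-mean i.i.d. with the stated variances (so $\frac1T\sum e^2[k-l]\to\sigma_e^2$, $\frac1T\sum w^2[k]\to\sigma_w^2$, $\frac1T\sum e[k-l]w[k]\to 0$) gives
\begin{align}
\lim_{T\to\infty}\frac1T\sum_{k=1}^{T}\bar v^2[k] + 2\bar v[k]\big(e[k-l]+w[k]\big)=0.\nonumber
\end{align}
Test 1, \eqref{ARMAX_test}, says $\frac1T\sum(\widetilde{z}[k]-e[k-l])^2=\frac1T\sum(w[k]+\bar v[k])^2\to\sigma_w^2$, which by the same expansion yields $\frac1T\sum \bar v^2[k] + 2\bar v[k]w[k]\to 0$. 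Subtracting the two relations isolates the cross term with the private excitation:
\begin{align}
\lim_{T\to\infty}\frac1T\sum_{k=1}^{T}\bar v[k]\,e[k-l]=0.\nonumber
\end{align}
So the filtered distortion is empirically uncorrelated with the (delayed) private excitation — the ARMAX analogue of \eqref{key1}.

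The main work, as in Theorem 1, is then to convert empirical uncorrelatedness with $\{e\}$ into empirical uncorrelatedness with $\{w\}$, so that $\frac1T\sum\bar v[k]w[k]\to 0$ and the Test-1 relation collapses to $\frac1T\sum\bar v^2[k]\to 0$. I would introduce the filtration $\mathcal S_k$ generated by the outputs, reports, and private excitations available by time $k$ (with the appropriate $l$-step delay so that $\bar v[k]$ and $e[k-l]$ are $\mathcal S_k$-measurable while $w[k]$ is not yet resolved), note that $w[k]$ conditioned on $\mathcal S_k$ depends only on the Gaussian quantity $y[k]+\sum a_k y[k-\cdot]-\sum b_k u^g[\cdot]=\sum_{j=0}^{r}c_j(e[k-l-j]+w[k-j])$ — here I must be slightly careful because the innovation is now the \emph{colored} combination $C(q^{-1})\lambda[k]$, not $\lambda[k]$ itself, so the conditional expectation $\widehat w[k]=E[w[k]\mid\mathcal S_k]$ will be a linear combination of the current and past $\lambda$'s with coefficients determined by the spectral factorization, of the form $\widehat w[k]=\beta\big(C^{-1}(q^{-1})\text{-filtered innovations}\big)$ with a contraction factor $\beta=\sigma_w^2/(\sigma_w^2+\sigma_e^2)<1$. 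Then $\widetilde w[k]\coloneqq w[k]-\widehat w[k]$ gives a martingale difference sequence adapted to $\mathcal S_k$, the Martingale Stability Theorem \cite{MST} gives $\sum\bar v[k]\widetilde w[k]=o(\sum\bar v^2[k])+O(1)$, and substituting the explicit form of $\widehat w[k]$ produces a relation of the form $\sum\bar v[k]w[k]=\kappa\sum\bar v[k]e[k-l]+(\text{lagged }w\text{-cross terms})+o(\sum\bar v^2[k])+O(1)$; solving the resulting linear recursion in the cross-correlations (exactly as the $\beta/(1-\beta)$ step in Theorem 1, but now a finite system of equations indexed by the lag, invertible because $\beta<1$ and $C$ is minimum phase) and invoking $\frac1T\sum\bar v[k]e[k-l]\to 0$ forces $\sum\bar v[k]w[k]=o(\sum\bar v^2[k])+o(T)+O(1)$. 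Dividing by $T$, substituting into the Test-1 relation, and taking $T\to\infty$ gives $\frac1T\sum\bar v^2[k]\to 0$, hence $\frac1T\sum v^2[k]\to 0$. The genuinely new obstacle relative to Theorem 1 is the bookkeeping around the colored innovation: getting the filtration and the delay $l$ aligned so that the martingale-difference property holds, and correctly identifying the conditional-expectation coefficients from the spectral factorization of $C$ so that the final linear system in the cross-correlations is invertible — this is where the "Internal Model Principle for watermarking" (matching the excitation spectrum to $C$ via \eqref{input_ARMAX}) does its work, and I would expect to spend most of the proof there.
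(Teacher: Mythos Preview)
Your proposal is correct and matches the paper's route: filter to $\bar v = C^{-1}(q^{-1})v$, derive the analogues of (\ref{t1use2}) and (\ref{key1}) from the two tests, then repeat the martingale argument of Theorem~1. You overcomplicate the last step, though: with $\mathcal S_k = \sigma(y^k,z^k,e^{k-l-1})$ (note $e[k-l]\notin\mathcal S_k$, otherwise $w[k]$ would already be resolved), the past values $\lambda[k-1],\lambda[k-2],\ldots$ are $\mathcal S_k$-measurable via the minimum-phase inverse $C^{-1}$, so the conditioning collapses to the single white observation $\lambda[k]=e[k-l]+w[k]$ and $\widehat w[k]=\beta\lambda[k]$ with $\beta=\sigma_w^2/(\sigma_w^2+\sigma_e^2)$ exactly as in Theorem~1 --- there is no lagged system of cross-correlations to invert, and the paper accordingly just writes ``proceeding the same way as in the proof of Theorem~1'' at that point.
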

\begin{proof}
From (\ref{pred_error_1}) and (\ref{pred_error_2}), one gets
\begin{align}
\sum_{k=0}^{r}c_k\widetilde{z}[t-k]&=z[t]+\sum_{k=1}^{p}a_k{z}[t-k]-\sum_{k=0}^{h}b_ku^g[t-l-k]\nonumber\\
&=\sum_{k=0}^{r}c_k\lambda[t-k]+v[t].
\end{align}
Equivalently, assuming appropriate initial conditions, one has
\begin{align}
C(q^{-1})\widetilde{z}[t]&=C(q^{-1})\lambda[t]+v[t].\nonumber
\end{align}
This gives
\begin{align}
\widetilde{z}[t]&=\lambda[t]+v_f[t],\nonumber
\end{align}
where $v_f[t]\coloneqq C^{-1}(q^{-1})v[t]$. Now,
\begin{align}
\widetilde{z}[t]&=\lambda[t]+v_f[t]\nonumber\\
&=w[t]+e[t-l]+v_f[t].
\end{align}

Since $\{\widetilde{z}\}$ passes (\ref{ARMAX_test}), we have from the above, 
\begin{align}
\lim_{T\to\infty} \frac{1}{T} \sum_{k=0}^{T-1} (v_f[k]+w[k])^2=\sigma_w^2.
\end{align}
This gives
\begin{align}
\lim_{T\to\infty} \frac{1}{T} \sum_{k=0}^{T-1} v_f^2[k]+2v_f[k]w[k]=0.\label{armaxuse1}
\end{align}
Since $\{\widetilde{z}\}$ also passes (\ref{ARMAX_test2}), we have
\begin{align}
\lim_{T\to\infty} \frac{1}{T} \sum_{k=0}^{T-1} (v_f[k]+w[k]+e[k-l])^2=\sigma_w^2+\sigma_e^2.
\end{align}
This gives
\begin{align*}
\lim_{T\to\infty} \frac{1}{T} \sum_{k=0}^{T-1} v_f^2[k]+2v_f[k]w[k]+2v_f[k]e[k-l]=0.
\end{align*}
Combining the above with (\ref{armaxuse1}), we have
\begin{align}
\lim_{T\to\infty} \frac{1}{T} \sum_{k=0}^{T-1} v_f[k]e[k-l]=0.\label{armaxuse2}
\end{align}

Comparing (\ref{armaxuse1}) and (\ref{armaxuse2}) with (\ref{t1use2}) and (\ref{key1}), we see that the filtered distortion measure $\{v_f\}$ behaves the same way $\{v\}$ does in the white noise case. Proceeding the same way as in the proof of Theorem 1, one arrives at
\begin{align}
\lim_{T\to\infty} \frac{1}{T} \sum_{k=0}^{T-1} v_f^2[k]=0.
\end{align}
Since $v[k]=C(q^{-1})v_f[k]$, and $C(q^{-1})$ is minimum phase, it follows that $\lim_{T\to\infty} \frac{1}{T} \sum_{k=0}^{T-1} v^2[k]=0.$
\end{proof}

\section{Active Defense for Networked Cyber-Physical Systems: SISO Systems with Partial Observations}\label{partially_observed}
In this section, we address SISO systems with noisy, partial observations. We consider a $p^{th}$ order single input single output system described by
\begin{align}
\mathbf{x}[t+1]=A\mathbf{x}[t]+Bu[t]+\mathbf{w}[t+1],\\
y[t+1]=C\mathbf{x}[t+1]+n[t+1].
\end{align}
where $\mathbf{x}[t]\in\mathbb{R}^{p},$ and $A$, $B$, and $C$ are known matrices of appropriate dimensions. The actuator, being honest, applies the input
\begin{align}
u[t]=g_t({z}^t)+e[t],
\end{align}
where $g_t({z}^t)$ is the control policy-specified input, and $e[t]\sim\mathcal{N}(0,\sigma_e^2)$ is a sequence of i.i.d random variables denoting the actuator node's private excitation. Consequently, the system evolves as
\begin{align}
\mathbf{x}[t+1]=A\mathbf{x}[t]+Bg_t({z}^t)+be[t]+\mathbf{w}[t+1],\\
y[t+1]=C\mathbf{x}[t+1]+n[t+1].
\end{align}
Let $z[t]$ be the measurement reported by the sensor at time $t$. Since the sensor can be malicious, $z[t]$ need not equal $y[t]$ for every $t$. 

The actuator performs Kalman filtering on the reported measurements $\{z\}$ as follows. 
\begin{align}
\mathbf{\widehat{x}}_F(k+1|k)=A\mathbf{\widehat{x}}_F(k|k)+Bg_k({z}^k)+Be[k],
\end{align}
\begin{align}
\mathbf{\widehat{x}}_F(k+1|k+1)=A\mathbf{\widehat{x}}_F(k|k)+Bg_k({z}^k)+Be[k]\nonumber\\
+K_k\nu_F[k+1],
\end{align}
where $\nu_F[k+1]\coloneqq z[k+1]-C\mathbf{\widehat{x}}_F(k+1|k),$ denotes the (possibly) faulty innovations computed by the actuator, and $K_k$ is the Kalman gain at time $k$. 

We also define a Kalman filter that operates on the true measurements $y[t]$ as follows.
\begin{align}
\mathbf{\widehat{x}}_R(k+1|k)=A\mathbf{\widehat{x}}_R(k|k)+Bg_k({z}^k)+Be[k],
\end{align}
\begin{align}
\mathbf{\widehat{x}}_R(k+1|k+1)=A\mathbf{\widehat{x}}_R(k|k)+Bg_k({z}^k)+Be[k]\nonumber\\
+K_k\nu_R[k+1],\label{trueKF}
\end{align}
where $\nu_R[k+1]\coloneqq y[k+1]-C\mathbf{\widehat{x}}_R(k+1|k)$ is the "real innovations" or "true innovations" of the system. Of course, the actuator cannot implement the latter Kalman filter since it may not receive $y[k]$ from the sensor. 

We now define
\begin{align}
\mathbf{v}[k+1]\coloneqq \mathbf{\widehat{x}}_F(k+1|k+1)-A\mathbf{\widehat{x}}_F(k|k)-Bg_k({z}^k)\nonumber\\
-Be[k]-K_k\nu_R[k+1],\label{definev}
\end{align}
so that for an honest sensor which reports $z\equiv y$, we have $\nu\equiv \nu_R$ and consequently, $\mathbf{v}\equiv 0$. The actuator performs the following two tests to detect maliciousness of the sensor.
\begin{itemize}
\item[1)] \textbf{Actuator Test 1:} Actuator checks if
\begin{align}
\lim_{T\to\infty} \frac{1}{T} \sum_{k=0}^{T-1} e[k](\mathbf{\widehat{x}}_F(k+1|k+1)-A\mathbf{\widehat{x}}_F(k|k)\nonumber\\
-Bg_k({z}^{k})-Be[k])=0\label{potest1}
\end{align}
\item[2)] \textbf{Actuator Test 2:} Actuator checks if
\begin{align}
&\lim_{T\to\infty} \frac{1}{T} \sum_{k=0}^{T-1} \nonumber\\
&(\mathbf{\widehat{x}}_F(k+1|k+1)-A\mathbf{\widehat{x}}_F(k|k)-Bg_k({z}^k)-Be[k])\nonumber\\
&(\mathbf{\widehat{x}}_F(k+1|k+1)-A\mathbf{\widehat{x}}_F(k|k)-Bg_k({z}^k)-Be[k])^T\nonumber\\
&=\sigma_R^2KK^T\label{potest2},
\end{align}
\end{itemize}
where $\sigma_R^2$ denotes the variance of the true innovations process, and $K$ is the steady-state Kalman gain. We assume that $(A,C)$ is observable so that the algebraic Riccati equation associated with the Kalman gain has a unique nonnegative definite solution \cite{KumarVaraiya}. The following theorem shows that the above tests suffice to ensure that a malicious sensor of only zero additive distortion power can pass the tests to remain undetected.
\begin{theorem}
Suppose that the reported sequence of measurements passes the tests (\ref{potest1}) and (\ref{potest2}). Then, 
\begin{enumerate}
\item \begin{equation}
\lim_{T\to\infty}\frac{1}{T}\sum_{k=0}^{T-1} \|\mathbf{v}[k+1]\|^2=0.
\end{equation}
\end{enumerate}
\end{theorem}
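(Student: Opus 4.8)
The plan is to run the argument of Theorem 1 almost verbatim, with the true innovations $\{\nu_R[k+1]\}$ in the role played there by the process noise $\{w[k]\}$, and the \emph{innovation distortion} $\delta[k+1]:=\nu_F[k+1]-\nu_R[k+1]$ in the role of $v[k]$. First I would record the algebraic identity $\mathbf{v}[k+1]=K_k\nu_F[k+1]-K_k\nu_R[k+1]=K_k\delta[k+1]$, so that $\|\mathbf{v}[k+1]\|^2=\|K_k\|^2\delta^2[k+1]$; since $K_k\to K$ (the steady-state Kalman gain exists because $(A,C)$ is observable), it suffices to prove $\lim_{T\to\infty}\frac{1}{T}\sum_{k=0}^{T-1}\delta^2[k+1]=0$, the degenerate case $K=0$ being immediate from Test 2 and boundedness of the innovations power. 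So the goal reduces to: \emph{the innovation distortion $\delta$ has zero empirical power}, which is the statement of Theorem 1 transplanted to the innovations.

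Next I would extract from the tests two scalar consequences mirroring (\ref{t1use2}) and (\ref{key1}). Pre- and post-multiplying (\ref{potest2}) by $K^T$ and $K$ and using $K_k\to K$ gives $\lim\frac{1}{T}\sum\nu_F^2[k+1]=\sigma_R^2$; combined with the ergodic limit $\lim\frac{1}{T}\sum\nu_R^2[k+1]=\sigma_R^2$ for the true innovations and $\nu_F=\nu_R+\delta$, this yields
\[ 2\lim_{T\to\infty}\frac{1}{T}\sum_{k=0}^{T-1}\nu_R[k+1]\delta[k+1]+\lim_{T\to\infty}\frac{1}{T}\sum_{k=0}^{T-1}\delta^2[k+1]=0, \]
which by Cauchy--Schwarz also supplies the a priori bound $\limsup\frac{1}{T}\sum\delta^2\le 4\sigma_R^2<\infty$. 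From (\ref{potest1}), using $\lim\frac{1}{T}\sum e[k]\nu_R[k+1]=0$ — valid because $e[k]$ is independent of $\nu_R[k+1]$, so $\{e[k]\nu_R[k+1]\}$ is a zero-mean martingale-difference sequence with bounded conditional variances — and then projecting onto the direction $K$, one obtains $\lim\frac{1}{T}\sum e[k]\delta[k+1]=0$.

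The crux, and the step I expect to be the main obstacle, is the martingale argument that converts empirical uncorrelatedness of $\delta$ with the private excitation $e$ into empirical uncorrelatedness of $\delta$ with $\nu_R$. The right conditioning $\sigma$-algebra is $\mathcal{G}_{k+1}:=\sigma(y^{k+1},z^{k+1},e^{k-1})$ — everything relevant at time $k+1$ \emph{except} the freshest private excitation $e[k]$. Two things must be checked. (i) $\delta[k+1]$ is $\mathcal{G}_{k+1}$-measurable: since $\nu_F[k+1]=z[k+1]-CA\mathbf{\widehat{x}}_F(k|k)-CBg_k(z^k)-CBe[k]$ and $\nu_R[k+1]=y[k+1]-CA\mathbf{\widehat{x}}_R(k|k)-CBg_k(z^k)-CBe[k]$, the terms $CBe[k]$ cancel, giving $\delta[k+1]=(z[k+1]-y[k+1])-CA(\mathbf{\widehat{x}}_F(k|k)-\mathbf{\widehat{x}}_R(k|k))$, and unrolling the two Kalman recursions shows $\mathbf{\widehat{x}}_F(k|k)\in\sigma(z^k,e^{k-1})$ and $\mathbf{\widehat{x}}_R(k|k)\in\sigma(y^k,z^{k-1},e^{k-1})$. (ii) With $\mathbf{m}[k+1]:=y[k+1]-CA\mathbf{\widehat{x}}_R(k|k)-CBg_k(z^k)$ one has $\mathbf{m}[k+1]\in\mathcal{G}_{k+1}$ and $\mathbf{m}[k+1]=CBe[k]+\nu_R[k+1]$, i.e.\ $\nu_R[k+1]=\mathbf{m}[k+1]-CBe[k]$ — the partial-observation analogue of the identity $be[k-1]+w[k]=x[k]-ax[k-1]-bg_{k-1}(z^{k-1})$ used in Theorem 1. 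Because $(e[k],\nu_R[k+1])$ is jointly Gaussian, is independent of $\sigma(y^k,z^k,e^{k-1})$ (the innovation property of the Kalman filter plus independence of $\{e\}$), and $e[k]\perp\nu_R[k+1]$, a short calculation gives
\[ \widehat{\nu}_R[k+1]:=E\!\left[\nu_R[k+1]\mid\mathcal{G}_{k+1}\right]=\beta\left(CBe[k]+\nu_R[k+1]\right),\qquad \beta:=\frac{\sigma_R^2}{(CB)^2\sigma_e^2+\sigma_R^2}<1, \]
which is precisely (\ref{use}) with $CB$ and $\sigma_R^2$ in the roles of $b$ and $\sigma_w^2$; this is where the (implicit) unit-delay condition $CB\neq0$ is needed, exactly as $b_0\neq0$ is in Section \ref{sisoarx}. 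The residual $\widetilde{\nu}_R[k+1]:=\nu_R[k+1]-\widehat{\nu}_R[k+1]$ lies in $\mathcal{G}_{k+2}$ and satisfies $E[\widetilde{\nu}_R[k+1]\mid\mathcal{G}_{k+1}]=0$, so $(\widetilde{\nu}_R[k+1],\mathcal{G}_{k+2})$ is a martingale-difference sequence with uniformly bounded conditional variances while $\delta[k+1]$ is predictable; the Martingale Stability Theorem then gives $\sum_{k=0}^{T-1}\delta[k+1]\widetilde{\nu}_R[k+1]=o\!\left(\sum_{k=0}^{T-1}\delta^2[k+1]\right)+O(1)$.

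Assembling as in Theorem 1: writing $\nu_R=\widehat{\nu}_R+\widetilde{\nu}_R$ and substituting the closed form of $\widehat{\nu}_R$,
\[ \sum_{k=0}^{T-1}\delta[k+1]\nu_R[k+1]=\beta CB\sum_{k=0}^{T-1}\delta[k+1]e[k]+\beta\sum_{k=0}^{T-1}\delta[k+1]\nu_R[k+1]+o\!\left(\sum_{k=0}^{T-1}\delta^2[k+1]\right)+O(1), \]
so, solving for $\sum\delta\nu_R$ and using $\sum\delta[k+1]e[k]=o(T)$ from the second paragraph, $\sum_{k=0}^{T-1}\delta[k+1]\nu_R[k+1]=o\!\left(\sum_{k=0}^{T-1}\delta^2[k+1]\right)+o(T)+O(1)$. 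Inserting this into the identity of the second paragraph yields $(1+o(1))\sum_{k=0}^{T-1}\delta^2[k+1]=o(T)+O(1)$; dividing by $T$, letting $T\to\infty$, and using $\limsup\frac{1}{T}\sum\delta^2<\infty$ forces $\lim\frac{1}{T}\sum_{k=0}^{T-1}\delta^2[k+1]=0$, hence $\lim\frac{1}{T}\sum_{k=0}^{T-1}\|\mathbf{v}[k+1]\|^2=\|K\|^2\cdot0=0$. Everything outside the martingale step is bookkeeping that copies the scalar proof; the only partial-observation-specific inputs are the routine facts that $K_k\to K$ exponentially under observability and that the true innovations form an (asymptotically) i.i.d.\ Gaussian sequence of variance $\sigma_R^2$, which also render the transient of $\beta$ in the display above asymptotically negligible.
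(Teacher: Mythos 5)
Your proposal is correct (at the paper's own level of rigor about steady-state gains and limits of sums), and its skeleton is the paper's: use (\ref{potest1}) and (\ref{potest2}) to get empirical uncorrelatedness of the distortion with $\{e\}$ plus the quadratic identity, then a conditional-expectation/MST argument as in Theorem 1 to kill the cross term with the true innovations. The execution, however, is genuinely different. The paper stays with the vector quantities $K_k\nu_R[k]$ and $\mathbf{v}[k]$, conditions on $\mathcal{S}_k=\sigma(\mathbf{\widehat{x}}_R^{k|k},\mathbf{\widehat{x}}_F^{k|k},z^k,e^{k-2},y^{k-1})$, obtains the matrix estimate (\ref{innov_mix_e}) with gain $K_\nu=\sigma_R^2K_{k-1}K_{k-1}^T(\sigma_R^2K_{k-1}K_{k-1}^T+\sigma_e^2BB^T)^{-1}$, inverts $I-K_\nu$, and equates diagonals, mirroring the MIMO argument of Theorem 5. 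You instead reduce to a scalar problem via $\mathbf{v}[k+1]=K_k\bigl(\nu_F[k+1]-\nu_R[k+1]\bigr)$, condition on the raw histories $\sigma(y^{k+1},z^{k+1},e^{k-1})$, and obtain the scalar gain $\beta=\sigma_R^2/\bigl((CB)^2\sigma_e^2+\sigma_R^2\bigr)$, i.e., Theorem 1 with $(b,\sigma_w^2)$ replaced by $(CB,\sigma_R^2)$. Your route makes the conditioning step transparent (given the past, $y[k+1]$ reveals exactly $\nu_R[k+1]+CBe[k]$, and $z[k+1]$ adds nothing since the sensor's report is a randomized function of $y^{k+1}$ -- the same assumption the paper buries in its Markov-chain claim), it avoids the paper's matrix algebra, which as literally written involves inverting $\sigma_R^2K_{k-1}K_{k-1}^T+\sigma_e^2BB^T$ (rank at most two) and $I-K_\nu$ (singular when $K$ and $B$ are linearly independent), and it surfaces explicitly the non-degeneracy condition $CB\neq 0$ and the $K=0$ side case, which the paper never states. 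What the paper's route buys is that it manipulates $\mathbf{v}$ directly in the form in which the tests and theorem are stated, with no reduction step and no replacement of $K_k$ by $K$ inside the Ces\`aro sums -- a bookkeeping point you note but only sketch.
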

\begin{proof}
Since the reported sequence of measurements passes (\ref{potest1}), we have
\begin{align}
\lim_{T\to\infty} \frac{1}{T} \sum_{k=0}^{T-1} e[k] (K_k\nu_R[k+1]+\mathbf{v}[k+1])=0.\nonumber
\end{align}
Since the true innovations are independent of the zero-mean private excitation, it follows that
\begin{align}
\lim_{T\to\infty} \frac{1}{T} \sum_{k=0}^{T-1} e[k]\mathbf{v}[k+1]=0.\label{po1}
\end{align}
Since the reported sequence of measurements also passes (\ref{potest2}), we have
\begin{align}
\lim_{T\to\infty} \frac{1}{T} \sum_{k=0}^{T-1} &(K_k\nu_R[k+1]+\mathbf{v}[k+1])\nonumber\\
&(K_k\nu_R[k+1]+\mathbf{v}[k+1])^T=\sigma_R^2KK^T.\nonumber
\end{align}
Simplifying, the above gives
\begin{align}
&\lim_{T\to\infty} \frac{1}{T} \sum_{k=0}^{T-1} (K_k\nu_R[k+1]\mathbf{v}^T[k+1])\nonumber\\
&+(K_k\nu_R[k+1]\mathbf{v}^T[k+1])^T+(\mathbf{v}[k+1]\mathbf{v}^T[k+1])=0.\label{po2}
\end{align}

We define $\mathcal{S}_k\coloneqq \sigma(\mathbf{\widehat{x}}_R^{k|k},\mathbf{\widehat{x}}_F^{k|k},z^{k},e^{k-2},y^{k-1}),$ where $\mathbf{\widehat{x}}_R^{k|k}\coloneqq \{\mathbf{\widehat{x}}_R(k|k),\mathbf{\widehat{x}}_R(k-1|k-1),...,\mathbf{\widehat{x}}_R(-1|-1)\},$ and $\mathbf{\widehat{x}}_F^{k|k}$ is defined likewise. Define $\widehat{\nu}_R[k]\coloneqq E[\nu_R[k]\big | \mathcal{S}_k].$ From the Kalman filtering equations, we have
\begin{align*}
K_{k-1}\nu_R[k]=\mathbf{\widehat{x}}_R(k|k)-A\mathbf{\widehat{x}}_R(k-1|k-1)\nonumber\\
-Bg_{k-1}(z^{k-1})-Be[k-1]
\end{align*}
implying that 
\begin{align*}
(\mathbf{\widehat{x}}_R^{k-2|k-2},\mathbf{\widehat{x}}_F^{k|k},e^{k-2},y^{k-1})\to\nonumber\\(\mathbf{\widehat{x}}_R(k-1|k-1),\mathbf{\widehat{x}}_R(k|k),z^k)\to\nu_R[k]
\end{align*}
forms a Markov chain. Therefore, $\widehat{\nu}_R[k]\coloneqq E[\nu_R[k]\big |\sigma(\mathbf{\widehat{x}}_R^{k|k},\mathbf{\widehat{x}}_F^{k|k},z^{k},e^{k-2},y^{k-1})]=E[\nu_R[k]\big | \sigma(\mathbf{\widehat{x}}_R(k-1|k-1),\mathbf{\widehat{x}}_R(k|k),z^k)].$ Therefore, we have 
\begin{align}
K_{k-1}\widehat{\nu}_R[k]=K_{\nu}(K_{k-1}\nu_R[k]+Be[k-1]),\label{innov_mix_e}
\end{align}
where $K_\nu\coloneqq \sigma_R^2K_{k-1}K_{k-1}^T(\sigma_R^2K_{k-1}K_{k-1}^T+\sigma_e^2BB^T)^{-1}$. Define $K_{k-1}\widetilde{\nu}_R[k]\coloneqq K_{k-1}\nu_R[k]-K_{k-1}\widehat{\nu}_R[k].$ Then, $(K_{k-1}\widetilde{\nu}_R[k-1],\mathcal{S}_k)$ is a martingale difference sequence. This is because, $\widetilde{\nu}_R[k-1]\in\mathcal{S}_k,$ and $$E[\widetilde{\nu}_R[k]\big |\mathcal{S}_k]=0.$$ Also, $\mathbf{v}[k]\in\mathcal{S}_k$. Hence, MST applies, and we have
\begin{align}
&\sum_{k=1}^{T} K_{k-1}\widetilde{\nu}_R[k]\mathbf{v}^T[k]=\nonumber\\
&\begin{bmatrix}
    o(\sum_{k=0}^{T-1}v^2_1[k+1]) &  \dots  & o(\sum_{k=0}^{T-1}v^2_p[k+1]) \\
    o(\sum_{k=0}^{T-1}v^2_1[k+1]) &  \dots  & o(\sum_{k=0}^{T-1}v^2_p[k+1]) \\
    \vdots & \dots & \vdots \\
    o(\sum_{k=0}^{T-1}v^2_1[k+1]) &  \dots  & o(\sum_{k=0}^{T-1}v^2_p[k+1])
\end{bmatrix} 
+O(1),\label{po4}
\end{align}
where $v_i[k]$ denotes the $i^{th}$ component of $\mathbf{v}[k]$. Now, since $K_k\nu_R[k]=K_k\widehat{\nu}_R[k]+K_k\widetilde{\nu}_R[k]$, from (\ref{innov_mix_e}), we have
\begin{align}
K_{k-1}\nu_R[k]=(I-K_\nu)^{-1} Be[k-1]\nonumber\\
+(I-K_\nu)^{-1}K_{k-1}\widetilde{\nu}_R[k].
\end{align}
Substituting this in (\ref{po2}) and equating the diagonals using (\ref{po1}) and (\ref{po4}) completes the proof. 
\end{proof}

\section{Active Defense for Networked Cyber-Physical Systems: MIMO Systems with Gaussian Noise}\label{mimo}
In this section, we investigate multiple-input-multiple-output (MIMO) linear stochastic dynamical systems. They pose additional challenges as compared to SISO systems since malicious sensors in a MIMO system can attempt to collude so as to prevent the other nodes from detecting their presence. In this section, we show how the actuators can prevent the malicious nodes from introducing "excessive" distortion, lest they expose their presence. 

An $m$ input MIMO Linear Dynamical System of order $n$ is described by 
\begin{align}
\mathbf{x}[t+1]=A\mathbf{x}[t]+B\mathbf{u}[t]+\mathbf{w}[t+1]
\end{align}
where $\mathbf{x}[t]\in \mathbb{R}^n$, $\mathbf{u}[t]\in \mathbb{R}^m$, ${A}\in \mathbb{R}^{n\times n}$, ${B}\in \mathbb{R}^{n\times m}$, and $\{\mathbf{w}\}$ is a zero-mean i.i.d. sequence of Gaussian random vectors with covariance matrix $\sigma_w^2{I}_n$. 
We consider the case where $\mathbf{x}$ is perfectly observed. 

Let $\mathbf{z[t]}$ be the measurements reported by the sensors to the other nodes in the system. Note that since the nodes in the system are allowed to exchange reported measurements among them, all malicious sensors have to be consistent and report the same (but possibly erroneous) measurements to all honest nodes in the system. A history-dependent control policy is assumed to be in place, which dictates that the $i^{th}$ input to the system at time $t$ be
\begin{align}
u^g_i[t]=g^i_t(\mathbf{z}^t).
\end{align}

As reasoned before, to secure the system, each actuator superimposes on the input specified by the control policy, an additional zero-mean private excitation that it draws from a distribution, here Gaussian, that is made public. It should be noted that while the \emph{distribution} is made public, the actual values of the excitation are not revealed by the actuator to any other node. The private excitation value drawn at each time $t$ is chosen to be independent of its private excitation values at all the other time instants, of the private excitation values of other actuator nodes, and of the control policy-specified input. Therefore, actuator $i$ applies at time $t$ the input
\begin{align}
u_i[t]=u^g_i[t]+e_i[t]=g^i_t(\mathbf{z}^t)+e_i[t],\label{input_protocol}
\end{align}
where $e_i[t]\sim \mathcal{N}(0,\sigma_e^2)$ is independent of $e_j[k]$ for $(j,k)\neq (i,t)$, $\mathbf{x}[m], \mathbf{z}[m]$ for $m\leq t$, and $\mathbf{w}[n]$ for all $n$. 
We assume that all actuators are honest, meaning that they apply control inputs in accordance to (\ref{input_protocol}).  

We propose the following tests to be performed by each actuator $i$. 

\begin{itemize}
\item[1)] \textbf{Actuator Test 1:} Actuator $i$ checks if the reported sequence of measurements $\{\mathbf{z}\}$ satisfies 
\begin{align}
&\lim_{T\to\infty} \frac{1}{T}\sum_{k=0}^{T-1} (\mathbf{z}[k+1]-A\mathbf{z}[k]-Bg_k(\mathbf{z}^k))\nonumber\\
&(\mathbf{z}[k+1]-A\mathbf{z}[k]-Bg_k(\mathbf{z}^k))^T=\sigma_e^2BB^T+\sigma_w^2I_n\label{mimotest2}
\end{align}
\item[2)] \textbf{Actuator Test 2:} Actuator $i$ checks if the reported sequence of measurements $\{\mathbf{z}\}$ satisfies
\begin{align}
\lim_{T\to\infty} \frac{1}{T} \sum_{k=0}^{T-1} e_i[k] (\mathbf{z}[k+1]-A\mathbf{z}[k]-Bg_k(\mathbf{z}^{k})) \nonumber\\
= B_{\cdot,i}\sigma_e^2\label{mimotest3}
\end{align}
\end{itemize}
In Actuator Test 2, we have simplified the test to directly check the quantity that is important, which in this is case is the analog of (\ref{test2}). We note that for any set of conditions to be an admissible test, they have to be (i) checkable by the sensors and actuators based on their observations and what is reported to them, and (ii) satisfied if all parties are honest. The above two conditions satisfy these two conditions.

Define $$\mathbf{v}[t+1]\coloneqq \mathbf{z}[t+1]-A\mathbf{z}[t]-Bg_t(\mathbf{z}^t)-B\mathbf{e}[t]-\mathbf{w}[t+1],$$ so that if $\mathbf{z}[t]\equiv \mathbf{x}[t],$ $\mathbf{v}[t]\equiv 0.$ It is easy to see that as in the case of SISO systems, the sequence $\{\mathbf{v}\}$ has the intepretation as the additive distortion of the process noise deliberately introduced by the malicious sensors. We term the quantity $$\lim_{T\to\infty}\frac{1}{T}\sum_{k=1}^{T} ||\mathbf{v}[k]||^2$$ the \emph{additive distortion power} of the malicious sensors. The following theorem, akin to Theorem 1, proves that malicious sensors of only zero additive distortion power can pass the above tests, thereby remaining undetected.

\begin{theorem}
Suppose that the reported sequence of measurements passes the tests (\ref{mimotest2}), and (\ref{mimotest3}). If $B$ is of rank $n$, then, 
\begin{equation}
\lim_{T\to\infty}\frac{1}{T}\sum_{k=1}^{T} ||\mathbf{v}[k]||^2=0.
\end{equation}
\end{theorem}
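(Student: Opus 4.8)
The plan is to mirror the proof of Theorem~1, promoting the scalar manipulations to matrix ones and tracking where the rank hypothesis on $B$ is consumed. Write $\mathbf{s}[k+1]\coloneqq\mathbf{z}[k+1]-A\mathbf{z}[k]-Bg_k(\mathbf{z}^k)$, so that by the definition of $\mathbf{v}$ we have $\mathbf{s}[k+1]=\mathbf{v}[k+1]+B\mathbf{e}[k]+\mathbf{w}[k+1]$. First I would extract two empirical-uncorrelatedness facts. Substituting this decomposition into Actuator Test~2 (\ref{mimotest3}) and invoking the strong law together with the independence of $\mathbf{e}$ from $\mathbf{w}$ and of distinct components of $\mathbf{e}$ — so that $\frac1T\sum_k e_i[k]B\mathbf{e}[k]\to B_{\cdot,i}\sigma_e^2$ and $\frac1T\sum_k e_i[k]\mathbf{w}[k+1]\to 0$ — leaves $\frac1T\sum_k e_i[k]\mathbf{v}[k+1]\to 0$ for each $i$; stacking over $i$ this is
\[
\lim_{T\to\infty}\frac1T\sum_{k=0}^{T-1}\mathbf{v}[k+1]\mathbf{e}^T[k]=0.
\]
Substituting the same decomposition into Actuator Test~1 (\ref{mimotest2}), the terms $\frac1T\sum_k B\mathbf{e}[k]\mathbf{e}^T[k]B^T\to\sigma_e^2BB^T$, $\frac1T\sum_k\mathbf{w}[k+1]\mathbf{w}^T[k+1]\to\sigma_w^2I_n$, and $\frac1T\sum_k B\mathbf{e}[k]\mathbf{w}^T[k+1]\to 0$, while the cross terms $\frac1T\sum_k\mathbf{v}[k+1]\mathbf{e}^T[k]B^T$ and its transpose vanish by the previous display; hence Test~1 collapses to $\lim_T\frac1T\sum_k(\mathbf{v}[k+1]\mathbf{v}^T[k+1]+\mathbf{v}[k+1]\mathbf{w}^T[k+1]+\mathbf{w}[k+1]\mathbf{v}^T[k+1])=0$, and taking the trace gives the matrix analogue of the key identity in the proof of Theorem~1:
\[
\lim_{T\to\infty}\frac1T\sum_{k=0}^{T-1}\Big(\|\mathbf{v}[k+1]\|^2+2\,\mathbf{v}^T[k+1]\mathbf{w}[k+1]\Big)=0.
\]

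Next I would dispense with the cross term $\frac1T\sum_k\mathbf{v}^T[k+1]\mathbf{w}[k+1]$ via the Martingale Stability Theorem, exactly as in Theorem~1. Take $\mathcal{S}_k\coloneqq\sigma(\mathbf{x}^k,\mathbf{z}^k,\mathbf{e}^{k-2})$ and $\widehat{\mathbf{w}}[k]\coloneqq E[\mathbf{w}[k]\,|\,\mathcal{S}_k]$. Since $\mathbf{w}[k]=\mathbf{x}[k]-A\mathbf{x}[k-1]-Bg_{k-1}(\mathbf{z}^{k-1})-B\mathbf{e}[k-1]$, the same Markov-chain argument used in the proof of Theorem~1 gives $(\mathbf{x}^{k-2},\mathbf{e}^{k-2})\to(\mathbf{x}[k-1],\mathbf{x}[k],\mathbf{z}^k)\to\mathbf{w}[k]$; and because $\mathbf{x}[k]-A\mathbf{x}[k-1]-Bg_{k-1}(\mathbf{z}^{k-1})=B\mathbf{e}[k-1]+\mathbf{w}[k]$ is a zero-mean Gaussian vector, i.i.d.\ in $k$, with covariance $\sigma_e^2BB^T+\sigma_w^2I_n$ and cross-covariance $\sigma_w^2I_n$ with $\mathbf{w}[k]$, the Gaussian conditioning formula yields
\[
\widehat{\mathbf{w}}[k]=M\big(B\mathbf{e}[k-1]+\mathbf{w}[k]\big),\qquad M\coloneqq\sigma_w^2\big(\sigma_e^2BB^T+\sigma_w^2I_n\big)^{-1}.
\]
Then $\widetilde{\mathbf{w}}[k]\coloneqq\mathbf{w}[k]-\widehat{\mathbf{w}}[k]$ satisfies $\widetilde{\mathbf{w}}[k-1]\in\mathcal{S}_k$ and $E[\widetilde{\mathbf{w}}[k]\,|\,\mathcal{S}_k]=0$, so $(\widetilde{\mathbf{w}}[k-1],\mathcal{S}_k)$ is a martingale-difference vector with constant conditional covariance, and $\mathbf{v}[k]\in\mathcal{S}_k$; applying the Martingale Stability Theorem \cite{MST} entrywise to the matrix $\sum_{k=1}^{T}\mathbf{v}[k]\widetilde{\mathbf{w}}^T[k]$ and bounding $\sum_k v_i^2[k]\le\sum_k\|\mathbf{v}[k]\|^2$ shows every entry is $o\!\big(\sum_{k=1}^{T}\|\mathbf{v}[k]\|^2\big)+O(1)$.

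Finally, I would solve for $\sum_k\mathbf{v}[k]\mathbf{w}^T[k]$. Writing $\mathbf{w}[k]=\widehat{\mathbf{w}}[k]+\widetilde{\mathbf{w}}[k]$ and using $M^T=M$, one obtains
\[
\Big(\sum_{k}\mathbf{v}[k]\mathbf{w}^T[k]\Big)(I_n-M)=\Big(\sum_{k}\mathbf{v}[k]\mathbf{e}^T[k-1]\Big)B^TM+E_T,
\]
with every entry of $E_T$ being $o\!\big(\sum_k\|\mathbf{v}[k]\|^2\big)+O(1)$. This is the one place the hypothesis is consumed: $I_n-M=\sigma_e^2BB^T\big(\sigma_e^2BB^T+\sigma_w^2I_n\big)^{-1}$ is invertible precisely when $BB^T$ is, i.e.\ when $B$ has rank $n$. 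Under that assumption, and using $\sum_k\mathbf{v}[k]\mathbf{e}^T[k-1]=o(T)$ from the first step, every entry of $\sum_k\mathbf{v}[k]\mathbf{w}^T[k]$ is $o\!\big(\sum_k\|\mathbf{v}[k]\|^2\big)+o(T)+O(1)$; taking the trace, $\sum_k\mathbf{v}^T[k]\mathbf{w}[k]=o\!\big(\sum_k\|\mathbf{v}[k]\|^2\big)+o(T)+O(1)$. Inserting this into the trace identity of the first step gives $(1-o(1))\sum_k\|\mathbf{v}[k]\|^2=o(T)+O(1)$, whence $\frac1T\sum_{k=1}^{T}\|\mathbf{v}[k]\|^2\to 0$. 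I expect the main obstacle to be the matrix-valued martingale bookkeeping and, more essentially, recognizing that full row rank of $B$ is exactly what renders $I_n-M$ invertible so that $\sum_k\mathbf{v}[k]\mathbf{w}^T[k]$ can be isolated; if $B$ is not of rank $n$ the private excitation cannot reach every coordinate of the state noise and both the argument and the theorem fail.
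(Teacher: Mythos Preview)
Your proposal is correct and follows essentially the same route as the paper's proof: the same filtration $\mathcal{S}_k$, the same Gaussian conditional mean $\widehat{\mathbf{w}}[k]=M(B\mathbf{e}[k-1]+\mathbf{w}[k])$ with $M=\sigma_w^2(\sigma_e^2BB^T+\sigma_w^2I_n)^{-1}$, the same entrywise application of the Martingale Stability Theorem, and the same use of the rank-$n$ hypothesis to invert $I_n-M$. The only cosmetic difference is that you take the trace of the matrix identity from Test~1 and work with $\|\mathbf{v}[k]\|^2$ throughout, whereas the paper retains the full matrix identity and equates diagonal entries coordinate by coordinate at the end; both lead to the same conclusion.
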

\begin{proof}
Since $\{\mathbf{z}\}$ satisfies (\ref{mimotest3}), we have, $\forall i\in\{1,2,...,m\}$,
\begin{align}
\lim_{T\to\infty}\frac{1}{T}\sum_{k=1}^{T} {e}_i[k](B\mathbf{e}[k]+\mathbf{w}[k+1]+\mathbf{v}[k+1])\nonumber\\
=\sigma_e^2B_{\cdot(i)}.\nonumber
\end{align}
It follows that for all $i\in\{1,2,\cdots,m\}$,
\begin{align}
\lim_{T\to\infty}\frac{1}{T}\sum_{k=1}^{T} {e}_i[k]\mathbf{v}[k+1]=0.
\end{align}
Therefore,
\begin{align}
\lim_{T\to\infty}\frac{1}{T}\sum_{k=1}^{T} \mathbf{e}[k]\mathbf{v}^T[k+1]=0.\label{t2use1}
\end{align}
Since $\{\mathbf{z}\}$ also satisfies (\ref{mimotest2}), we have, 
\begin{align}
&\lim_{T\to\infty} \frac{1}{T}\sum_{k=0}^{T-1} (B\mathbf{e}[k]+\mathbf{w}[k+1]+\mathbf{v}[k+1])\nonumber\\
&(B\mathbf{e}[k]+\mathbf{w}[k+1]+\mathbf{v}[k+1])^T=\sigma_e^2BB^T+\sigma_w^2I_n
\end{align}
Using (\ref{t2use1}) and the fact that the process noise $\mathbf{w}[k+1]$, and the private excitation of the actuators at time $k$ are independent, the above simplifies to
\begin{align}
\lim_{T\to\infty} \frac{1}{T}\sum_{k=0}^{T-1} (\mathbf{w}[k+1]\mathbf{v}^T[k+1])+(\mathbf{w}[k+1]\mathbf{v}^T[k+1])^T\nonumber\\
(\mathbf{v}[k+1]\mathbf{v}^T[k+1])=0.\label{t2use2}
\end{align}

Define $S_{k}\coloneqq \sigma(\mathbf{x}^{k},\mathbf{z}^{k},\mathbf{e}^{k-2})$, and $\widehat{\mathbf{w}}[k]\coloneqq E[\mathbf{w}[k]\big |S_{k}]$. Since $$\mathbf{w}[k]=\mathbf{x}[k]-A\mathbf{x}[k-1]-Bg_{k-1}(\mathbf{z}^{k-1})-B\mathbf{e}[k-1]),$$ we have $$(\mathbf{x}^{k-2},\mathbf{e}^{k-2})\to(\mathbf{x}[k-1],\mathbf{x}[k],\mathbf{z}^{k})\to\mathbf{w}[k]$$ forming a Markov chain. Consequently, $\mathbf{\widehat{w}}[k]\coloneqq E[\mathbf{w}[k]\big |\sigma(\mathbf{x}^{k-2},\mathbf{e}^{k-2},\mathbf{x}[k-1],\mathbf{x}[k],\mathbf{z}^{k})]=E[\mathbf{w}[k]\big | \sigma(\mathbf{x}[k-1],\mathbf{x}[k],\mathbf{z}^{k})].$ Therefore,
\begin{align}
\widehat{\mathbf{w}}[k]=K_W(B\mathbf{e}[k-1]+\mathbf{w}[k]),
\end{align}
where $K_W\coloneqq \sigma_w^2(\sigma_e^2BB^T+\sigma_w^2I)^{-1}.$ Now define $\widetilde{\mathbf{w}}[k]\coloneqq \mathbf{w}[k]-\widehat{\mathbf{w}}[k]$. Then, $(\widetilde{\mathbf{w}}[k-1],\mathcal{S}_k)$ is a martingale difference sequence. This is because $\widetilde{\mathbf{w}}[k-1]\in\mathcal{S}_k$, and
\begin{align}
E[\widetilde{\mathbf{w}}[k]\big | S_{k}]=0.
\end{align}
Also, $\mathbf{v}[k]\in\mathcal{S}_k$. Hence, MST applies, and we have
\begin{align}
&\sum_{k=0}^{T-1} {\widetilde{\mathbf{w}}[k+1]\mathbf{v}}^T[k+1]=\nonumber\\
&\begin{bmatrix}
    o(\sum_{k=0}^{T-1}v^2_1[k+1]) &  \dots  & o(\sum_{k=0}^{T-1}v^2_p[k+1]) \\
    o(\sum_{k=0}^{T-1}v^2_1[k+1]) &  \dots  & o(\sum_{k=0}^{T-1}v^2_p[k+1]) \\
    \vdots & \dots & \vdots \\
    o(\sum_{k=0}^{T-1}v^2_1[k+1]) &  \dots  & o(\sum_{k=0}^{T-1}v^2_p[k+1])
\end{bmatrix} 
+O(1),\label{usef2}
\end{align}
where $v_i[k+1]$ denotes the $i^{th}$ element of $\mathbf{v}[k+1]$.

Using the above, we have
\begin{align}
\mathbf{w}[k+1]&=\widehat{\mathbf{w}}[k+1]+\widetilde{\mathbf{w}}[k+1]\nonumber\\
&=K_W(B\mathbf{e}[k]+\mathbf{w}[k+1])+\widetilde{\mathbf{w}}[k+1].
\end{align}
From the assumption on the rank of $B$, it follows that $K_W$ has all eigenvalues strictly lesser than unity. Therefore, simplifying the above, 
\begin{align}
\mathbf{w}[k+1]=(I-K_W)^{-1}K_WB\mathbf{e}[k]\nonumber\\
+(I-K_W)^{-1}\widetilde{\mathbf{w}}[k+1].
\end{align}
Substituting this into (\ref{t2use2}), using (\ref{t2use1}) and (\ref{usef2}), and equating the $q^{th}$ entry along the diagonal, we have
\begin{align}
\sum_{k=0}^{T-1} v^2_q[k+1] + o(\sum_{k=0}^{T-1} v^2_q[k+1])=o(T).
\end{align}
Since this is true for all $q\in\{1,2,...,p\}$, dividing the above by $T$ and taking the limit as $T\to\infty$ completes the proof.
\end{proof}

\section{Extension to Non-Gaussian Systems}\label{nongaussian}
The results developed in the previous sections assumed that the process noise follows a Gaussian distribution. This assumption can be relaxed to a certain extent. In this section, we illustrate how this may be relaxed for a single-input-single-output system. 

Consider a SISO system described by 
\begin{align}
x[t+1]=ax[t]+bu[t]+w[t],
\end{align}
where $w[t]\sim P_W$ is an i.i.d. process with mean 0 and variance $\sigma_w^2$. In such a case, the actuator can choose the distribution of its private excitation such that the output of the private excitation has the same distribution as the process noise. Hence, the actuator applies to the system the input
\begin{align}
u[t]=g_t(z^t)+e[t],
\end{align}
where $be[t]\sim P_W$ is an i.i.d. sequence. As before, even though the actuator implements the control policy $\{g\}$, the policy is applied to the measurements $z[t]$ reported by the sensor, which could differ from the true output $x[t]$. Therefore, the system evolves in closed-loop as
\begin{align}
x[t+1]=ax[t]+bg_t(z^t)+be[t]+w[t].
\end{align}

The actuator then performs the following tests to check if the sensor is malicious or not.   
\begin{itemize}
\item[1)] \textbf{Actuator Test 1:} Check if the reported sequence of measurements $\{z[t]\}$ satisfies
\begin{flalign}
&\lim_{T\to\infty} \frac{1}{T} \sum_{k=0}^{T-1}(z[k+1]-az[k]-bg_k(z^k)-be[k])^2\nonumber\\
&=\sigma_w^2.\label{test2_nong}
\end{flalign}

\item[2)] \textbf{Actuator Test 2:} Check if the reported sequence of measurements $\{z[t]\}$ satisfies
\begin{flalign}
&\lim_{T\to\infty} \frac{1}{T} \sum_{k=0}^{T-1} (z[k+1]-az[k]-bg_k(z^k))^2\nonumber\\ 
&=2\sigma_w^2.\label{test1_nong}
\end{flalign}
\end{itemize}
As before, let $$v[t+1]\coloneqq z[t+1]-az[t]-bg_t(z^t)-be[t]-w[t+1],$$ so that for an honest sensor which reports $z[t]\equiv x[t]$, $v[t]=0\;\; \forall t$. The \emph{additive distortion power} of a malicious sensor too is defined as before, i.e., as
$$\lim_{T\to\infty} \frac{1}{T} \sum_{k=1}^{T} v^2[k].$$
The following result, a generalization of Theorem 1, shows that the above tests suffice to ensure that a malicious sensor of only zero effective power can remain undetected.  

\begin{theorem}
If $\{z[t]\}$ satisfies tests (\ref{test2_nong}) and (\ref{test1_nong}), then, 
\begin{equation}
\lim_{T\to\infty} \frac{1}{T} \sum_{k=1}^{T} v^2[k]=0.\label{thm1a_nong} 
\end{equation}
\end{theorem}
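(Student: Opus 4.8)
The plan is to run the proof of Theorem 1 essentially verbatim: every step there that uses only second moments and independence survives with no distributional hypothesis, and the single place where Gaussianity genuinely entered — the closed form (\ref{use}) for the conditional expectation $\widehat w[k] = E[w[k]\mid\mathcal S_k]$ — will be recovered here by an exchangeability argument that reproduces exactly that formula with the constant $\beta$ equal to $1/2$.

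First I would rewrite the two tests in terms of $v$ and $w$. Since $z[k+1]-az[k]-bg_k(z^k)-be[k] = w[k+1]+v[k+1]$ and $z[k+1]-az[k]-bg_k(z^k) = w[k+1]+be[k]+v[k+1]$, Test 1 (\ref{test2_nong}) becomes $\lim_T\frac1T\sum_{k=1}^T(v[k]+w[k])^2=\sigma_w^2$ and Test 2 (\ref{test1_nong}) becomes $\lim_T\frac1T\sum_{k=1}^T(v[k]+be[k-1]+w[k])^2=2\sigma_w^2$. Expanding the first and using the strong law $\frac1T\sum w^2[k]\to\sigma_w^2$ gives $\lim_T\frac1T\sum(v^2[k]+2v[k]w[k])=0$, the analogue of (\ref{t1use2}). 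Subtracting this identity from the expansion of the second, and using $\frac1T\sum(be[k-1])^2\to\operatorname{Var}(be[k-1])=\sigma_w^2$ (as $be\sim P_W$) together with $\frac1T\sum e[k-1]w[k]\to0$ (strong law for the i.i.d.\ zero-mean sequence $\{e[k-1]w[k]\}$), one obtains $\lim_T\frac1T\sum e[k-1]v[k]=0$, i.e.\ $\sum_{k=1}^T v[k]e[k-1]=o(T)$ — the analogue of (\ref{key1}). Only finiteness of second moments is used here.

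The new ingredient is the conditional-mean computation. With $\mathcal S_k:=\sigma(x^k,z^k,e^{k-2})$ and $\widehat w[k]:=E[w[k]\mid\mathcal S_k]$, the same Markov-chain reduction as in Theorem 1 applies: the dependence of $x[k]$, and hence of the sensor's reported $z[k]$, on the current innovations $(w[k],e[k-1])$ factors through $s[k]:=be[k-1]+w[k]=x[k]-ax[k-1]-bg_{k-1}(z^{k-1})$, which is $\mathcal S_k$-measurable, so $\widehat w[k]=E[w[k]\mid s[k]]$. In place of joint Gaussianity, I would now invoke exchangeability: $w[k]$ and $be[k-1]$ are i.i.d.\ (both distributed as $P_W$), so $E[w[k]\mid be[k-1]+w[k]]=E[be[k-1]\mid be[k-1]+w[k]]$, and since these two conditional expectations sum to $be[k-1]+w[k]$, each equals $\tfrac12(be[k-1]+w[k])$. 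Thus $\widehat w[k]=\tfrac12(be[k-1]+w[k])$, which is precisely (\ref{use}) with $\beta=1/2$ (consistent, since here $b^2\sigma_e^2=\operatorname{Var}(be[k-1])=\sigma_w^2$, so $\sigma_w^2/(b^2\sigma_e^2+\sigma_w^2)=1/2$).

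From here the argument closes exactly as in Theorem 1: $(\widetilde w[k-1],\mathcal S_k)$ with $\widetilde w[k]:=w[k]-\widehat w[k]=\tfrac12(w[k]-be[k-1])$ is a martingale difference sequence, and $v[k]\in\mathcal S_k$, so MST gives $\sum_{k=1}^T v[k]\widetilde w[k]=o(\sum v^2[k])+O(1)$; substituting $\widehat w[k]$ yields $\sum v[k]w[k]=\tfrac b2\sum v[k]e[k-1]+\tfrac12\sum v[k]w[k]+o(\sum v^2[k])+O(1)$, so after solving and using $\sum v[k]e[k-1]=o(T)$ one gets $\sum v[k]w[k]=o(\sum v^2[k])+o(T)+O(1)$; plugging this into $\sum v^2[k]+2\sum v[k]w[k]=o(T)$ gives $(1+o(1))\sum v^2[k]=o(T)$, which is (\ref{thm1a_nong}). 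I expect the only delicate point to be the Markov reduction — one must argue that, although the malicious sensor is free to report any function of the true output history, its dependence on $(w[k],e[k-1])$ still factors through $s[k]$, so that conditioning on all of $\mathcal S_k$ collapses to conditioning on $s[k]$; once that is in hand the exchangeability identity substitutes cleanly for the Gaussian formula (a mild extra moment condition on $P_W$ may be needed so that the conditional second moments of $\widetilde w[k]$, namely $\tfrac14 E[(w[k]-be[k-1])^2\mid s[k]]$, satisfy the hypotheses of MST).
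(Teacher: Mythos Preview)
Your proposal is correct and follows essentially the same route as the paper: derive the analogues of (\ref{t1use2}) and (\ref{key1}) from the two tests, establish $\widehat w[k]=\tfrac12(be[k-1]+w[k])$ from the fact that $be[k-1]$ and $w[k]$ are identically distributed, and then rerun the martingale argument of Theorem~1 with $\beta=1/2$. Your exchangeability justification for the conditional-mean formula is in fact more explicit than the paper's own statement of (\ref{estimate_form}), which simply asserts the formula on the grounds that the excitation is chosen to have the same distribution as the process noise.
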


\begin{proof}
Following the same sequence of arguments as in the proof of Theorem 1, we arrive at the following equalities: 

\begin{align}
\lim_{T\to\infty} \frac{1}{T} \sum_{k=1}^{T} v^2[k]+ \lim_{T\to\infty} \frac{1}{T} \sum_{k=1}^{T} 2v[k]w[k]=0, \label{t1use2_g}
\end{align}

\begin{align}
\lim_{T\to\infty} \frac{1}{T} \sum_{k=1}^{T} e[k-1]v[k]=0. \label{key1_g}
\end{align}

Let $\mathcal{S}_k\coloneqq \sigma (x^k,z^k,e^{k-2})$, and $\widehat{w}[k]\coloneqq E[w[k]\big | \mathcal{S}_k]$. The general complication is that due to non-Gaussianity, the conditional mean estimate may be non-linear in $e[k-1]$ and $w[k]$, unlike in the Gaussian case. However, since the sequence of observations are i.i.d., and the distribution of the actuator's private excitation is chosen to be the same as the distribution of the process noise, we have
\begin{equation}
\widehat{w}[k]=\frac{1}{2}(be[k-1]+w[k])=\beta (be[k-1]+w[k]),\label{estimate_form}
\end{equation}
where $\beta\coloneqq \frac{1}{2}$. This can be written as 
\begin{equation*}
\widehat{w}[k]=\alpha e[k-1] + \beta w[k], 
\end{equation*}
where $\alpha\coloneqq b\beta$. Let $\widetilde{w}[k]\coloneqq w[k]-\widehat{w}[k]$. 

Now, the following result also holds following the same sequence of arguments employed in the proof of Theorem 1:
\begin{equation}
\sum_{k=1}^{T} v[k]\widetilde{w}[k]=o(\sum_{k=1}^{T} v^2[k]) + O(1).
\end{equation}

Hence, following similar arguments as in the proof of Theorem 1, we obtain
\begin{align}
&\sum_{k=1}^{T} v^2[k]+ \sum_{k=1}^{T} 2v[k]w[k]=(1+o(1))(\sum_{k=1}^{T} v^2[k]) + O(1).\nonumber
\end{align}
Dividing the above equation by $T$, taking the limit as $T\to\infty$, and invoking (\ref{t1use2_g}) completes the proof. 
\end{proof}

\noindent\textbf{Remark:} The net import of this result is that in order to safeguard against malicious actions of the sensor we need to add a level of noise exactly equal to the process noise already present in the system. Hence there is an amplification of the process noise's standard deviation by $\sqrt{2}$ that appears the price of guarding against malicious attack by the sensor. In contrast, in the Gaussian case, we have seen that this extra cost can be made as small as desired by choosing $\sigma_e^2$ as small as desired, though of course at the cost of delaying detection with an acceptable false alarm probability, as we discuss in the next section.

\section{Statistical Tests for Active Defense}\label{statTests}
The results developed in the previous sections are couched in an asymptotic fashion. They characterize what can be detected and how that may be done. These asymptotic characterizations can be used to develop statistical tests that reveal malicious activity in a finite period of time with acceptable false alarm rates. 

The task at hand is to develop statistical tests equivalent to asymptotic tests such as (\ref{test2}), (\ref{test1}), (\ref{mimotest2}), (\ref{mimotest3}), which include tests for covariance matrices of certain random vectors. This problem has been addressed in \cite{mehra_71} in the context of fault detection in control systems, in which the innovation sequence is tested for the properties of whiteness, mean, and covariance. Since these are also the properties that have to be satisfied by the test sequences that we construct in (\ref{test2}), (\ref{test1}), and (\ref{mimotest3}), the same test described in \cite{mehra_71} can be used. 



Yet another test that is particularly well-suited for the problem at hand is the sequential probability ratio test, introduced in \cite{wald_sequential}. In the standard setting of sequential hypothesis testing, for each time $t$, the set of all possible observations $S_m$ (or the space of sufficient statistics) is partitioned into three sets $R_0$, $R_1$, $R$. At each time $t$, a decision is made whether to select the null hypothesis (and stop), select the the alternative hypothesis (and stop), or continue observing. The null (alternative) hypothesis is accepted and hypothesis testing is stopped if at some time $t$, the observations available till time $t$, denoted by $\{z_s^t\}$, fall in the set $R_0$ ($R_1$). Hypothesis testing is continued at time $t$ if $\{z_s^t\}$ falls in $R$. One such sequential test is the sequential probability ratio test presented in \cite{wald_sequential}, where the likelihood ratio of observations up to time $t$, denoted by $\Lambda_t$, is compared against two thresholds $\tau_1$ and $\tau_2$. The null (alternative) hypothesis is accepted at time $t$ if $\Lambda_t<\tau_1$ ($\Lambda_t>\tau_2$). If $\tau_1\leq\Lambda_t\leq\tau_2$, another observation is drawn, and the process repeats.

There are a few aspects where the problem at hand departs from the above. These are enumerated below. In what follows, the null hypothesis is that the control system is not under attack.
\begin{itemize}
\item[1)] In the problem of secure control, it is not possible to attribute a distribution (or even a parameterized, finite-dimensional family of distributions) from which the observations would occur under the alternative hypothesis (that the control system is under attack). Consequently, a likelihood ratio cannot be defined, and the sequential probability ratio test cannot be used. Hence, the problem at hand is not to test one hypothesis against another, but is only that of rejecting the null hypothesis or otherwise.
\item[2)] In the classical setup as described in \cite{wald_sequential}, in a finite time and with probability 1, the test stops, and one hypothesis or the other is accepted. However, the adversary could be a "sleeper agent". It could behave as an honest party for a long period of time, and then operate in a malicious manner. Hence in our situation, at no time can one "accept" the null hypothesis forever for all subsequent time. Therefore, in the systems of interest here, there are only two possible decisions at any time $t$, reject the null hypothesis or continue observations. 
\item[3)] In the classical setting, optimal performance is obtained by considering all observations that are available up to the time of making a decision. However, in our problem, an adversary that behaves maliciously in a bursty fashion, is not likely to be detected since long term averaging would nullify these effects. In order to account for this, a moving window approach could be employed over which a test statistic can be calculated. That is, the statistical test is conducted over a window of $l$ observations, $nl \leq t < (n+1)l$. In each window it assesses whether to reject the null hypotheses. In this manner, malicious activity can be detected within $l$ samples, with acceptable false alarm rate. The choice of the excitation variance $\sigma_e^2$ can be based on the acceptable detection delay $l$, and the false alarm rate. 
\end{itemize}
Based on the above, the basis of a statistical hypothesis test for the problem can be chosen as follows. Under the null hypothesis, the sample covariance matrix follows the Wishart distribution \cite{anderson_book} (the multidimensional counterpart of the Chi-squared distribution). Hence, for a given false alarm rate $\alpha$, the threshold $\tau(\alpha)$ can be determined for the likelihood function of the observations, where the likelihood function considered for the test at time $t$ is the probability density evaluated at the $l$ most recent observations. If this likelihood function exceeds the threshold at any time $t$, an alarm is raised. If not, the test is repeated for the next time instant. The following simulation example is illustrative. 

\textbf{Example:} We consider the scenario of an ARX system addressed in Section-\ref{sisoarx}. Specifically, we consider a second order plant of the type common in process control. A single-input, single-output stable plant obeying (\ref{siso_arx_model}), with parameters $a_0=0.7$, $a_1=0.2$, $b_0=1$, $b_1=0.5$ and $\sigma_w^2=1$ is considered. Note that the chosen system is minimum phase, so that the pre-equalizer required to filter the sequence of private excitation is stable. 

The sensor is assumed to be initially well-behaved, but suddenly become malicious after a certain period (denoting the time of initiation of attack), while the actuator is assumed to remain uncompromised. The actuator applies control inputs $$u[k]=-\frac{1}{b_0}(a_0z[t]+a_1z[t-1]+b_1u[k-1])+e[k],$$ where $z[k]$ is the measurement reported by the sensor at time $k$, and $e[k]\sim$ i.i.d. $\mathcal{N}(0,1)$ is the actuator node's private excitation. Therefore, the system evolves as 
\begin{align*}
y[k+1]=0.7(y[k]-z[k])+0.3(y[k-1]-z[k-1])\\
+e[k]+w[k+1].
\end{align*}

The adversary attacks the system at some random time and begins to report false measurements. In order to do so, the adversary estimates optimally the process noise at each instant from the measurements that it observes. Since the process noise and the private excitation have the same variance, the optimal estimate is simply $\frac{1}{2}(y[k+1]-0.7(y[k]-z[k])-0.3(y[k-1]-z[k-1]).$ Once the adversary forms its estimate, it adds a noise correlated with the estimate as follows. The adversary forms $v[t]=n[t]-\widehat{w}[t]$, where $n[t]\sim \mathcal{N}(0,1)$, the same distribution as $w[t]$, and reports $z[t]=x[t]+v[t]$. Note that in the absence of dynamic watermarking, employing this strategy would enable the adversary to form perfect estimates of the process noise, and consequently, as reasoned in Section-\ref{solution}, would render every detection algorithm useless. 

Fig. \ref{stat} plots the evolution of the negative log likelihood function over time when dynamic watermarking is employed. In our simulation, the adversary initiates attack at time epoch $4500$. As indicated in Fig. \ref{stat}, the (windowed) negative log likelihood function corresponding to (\ref{test2_arx}) stays within limits until the attack. Around time epoch 4500, the likelihood function steadily increases, indicating the onset of an attack. Based on tolerable false alarm rates, an appropriate detection threshold can be set. 

\begin{figure}
\centering
\includegraphics[width=\columnwidth]{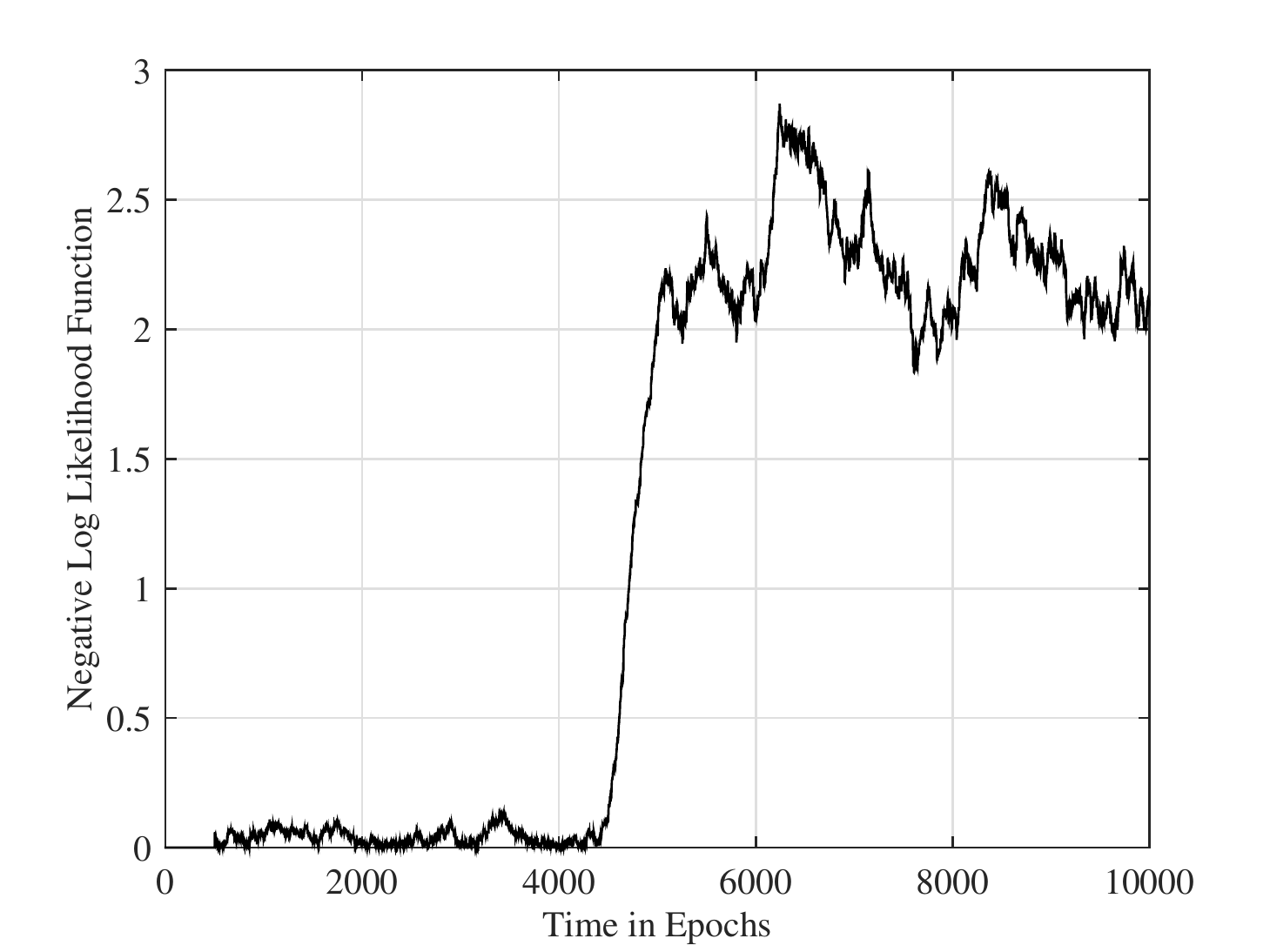}
\caption{Sequential Hypothesis Testing for Attack Detection}\label{stat}
\end{figure}

\section{Conclusion}\label{conclusion}
In this paper, we have considered the fundamental problem of security of networked cyber-physical systems. Securing such systems is not the same as securing a communication network due to the fundamental role played by sensors and actuators in interfacing with and controlling the physical plant. We have provided a general procedure for dynamic watermarking that imposes private excitation signals on the actuation signals whose presence can then be traced around the loop to detect malicious behaviors of sensors. 

We have only explored the tip of the iceberg of this procedure. It can be used in more general contexts for a variety of plants of interest in control systems such as  process control systems where more nonlinear models may be of interest. Further study into performance of finite-time tests for Dynamic Watermarking would also be of practical interest.

There is much to be done in this area if cyber-physical system applications are to proliferate to address critical infrastructural and societal needs.

\section*{Addendum}
\noindent \textbf{Notation:} This paper uses the following notation. 
\begin{itemize}
\item Scalars are denoted using lowercase: $a$
\item Vectors are denoted using lowercase boldface: $\mathbf{x}$
\item The $i^{th}$ component of vector $\mathbf{x}$: $x_i$
\item Matrices are denoted using uppercase: $A$
\item The $i^{th}$ column of matrix $A$: $A_{\cdot i}$
\item The $i^{th}$ row of matrix $A$: $A_{i\cdot}$
\item The submatrix of $A$ formed by the intersection of rows $i$ through $j$ and columns $p$ through $q$: $A_{i:j,p:q}$
\item The matrix with $i^{th}$ column of $A$ removed: $A_{\cdot (-i)}$
\item The matrix with $i^{th}$ row of $A$ removed: $A_{(-i)\cdot}$
\item The vector with $i^{th}$ component of $\mathbf{x}$ removed: $\mathbf{x}_{-i}$
\end{itemize}

\section*{acknowledgments}
The authors would like to thank Bruno Sinopoli (CMU), Yilin Mo (NTU) and Sean Weerakkody (CMU) for their valuable comments.


%

\ifCLASSOPTIONcaptionsoff
  \newpage
\fi



%
\bibliographystyle{IEEEtran}
\bibliography{references}

\begin{thebibliography}{10}
\providecommand{\url}[1]{#1}
\csname url@samestyle\endcsname
\providecommand{\newblock}{\relax}
\providecommand{\bibinfo}[2]{#2}
\providecommand{\BIBentrySTDinterwordspacing}{\spaceskip=0pt\relax}
\providecommand{\BIBentryALTinterwordstretchfactor}{4}
\providecommand{\BIBentryALTinterwordspacing}{\spaceskip=\fontdimen2\font plus
\BIBentryALTinterwordstretchfactor\fontdimen3\font minus
  \fontdimen4\font\relax}
\providecommand{\BIBforeignlanguage}[2]{{%
\expandafter\ifx\csname l@#1\endcsname\relax
\typeout{** WARNING: IEEEtran.bst: No hyphenation pattern has been}%
\typeout{** loaded for the language `#1'. Using the pattern for}%
\typeout{** the default language instead.}%
\else
\language=\csname l@#1\endcsname
\fi
#2}}
\providecommand{\BIBdecl}{\relax}
\BIBdecl

\bibitem{cardenas_challenges}
A.~Cardenas, S.~Amin, B.~Sinopoli, A.~Giani, A.~Perrig, and S.~Sastry,
  ``Challenges for securing cyber physical systems,'' in \emph{Workshop on
  future directions in cyber-physical systems security}, 2009.

\bibitem{sinopoli}
Y.~Mo, T.~H.-J. Kim, K.~Brancik, D.~Dickinson, H.~Lee, A.~Perrig, and
  B.~Sinopoli, ``Cyber-physical security of a smart grid infrastructure,''
  \emph{Proceedings of the IEEE}, vol. 100, no.~1, pp. 195--209, 2012.

\bibitem{maroochyShire}
M.~Abrams and J.~Weiss, ``Malicious control system cyber security attack case
  study-{M}aroochy water services, {A}ustralia,'' \emph{Technical Report},
  2008.

\bibitem{stuxnet}
R.~Langner, ``Stuxnet: Dissecting a cyberwarfare weapon,'' \emph{Security \&
  Privacy, IEEE}, vol.~9, no.~3, pp. 49--51, 2011.

\bibitem{lbnl}
\BIBentryALTinterwordspacing
S.~Staniford, V.~Paxson, and N.~Weaver, ``How to own the internet in your spare
  time,'' in \emph{Proceedings of the 11th USENIX Security Symposium}.\hskip
  1em plus 0.5em minus 0.4em\relax Berkeley, CA, USA: USENIX Association, 2002,
  pp. 149--167. [Online]. Available:
  \url{http://dl.acm.org/citation.cfm?id=647253.720288}
\BIBentrySTDinterwordspacing

\bibitem{watermarking1}
J.~T. Brassil, S.~Low, N.~F. Maxemchuk, and L.~O. Gorman, ``Electronic marking
  and identification techniques to discourage document copying,'' \emph{IEEE
  Journal on Selected Areas in Communications}, vol.~13, no.~8, pp. 1495--1504,
  1995.

\bibitem{watermarking2}
F.~Hartung and M.~Kutter, ``Multimedia watermarking techniques,''
  \emph{Proceedings of the IEEE}, vol.~87, no.~7, pp. 1079--1107, 1999.

\bibitem{watermarking3}
A.~Z. Tirkel, G.~Rankin, R.~Van~Schyndel, W.~Ho, N.~Mee, and C.~F. Osborne,
  ``Electronic watermark,'' \emph{Digital Image Computing, Technology and
  Applications (DICTA’93)}, pp. 666--673, 1993.

\bibitem{PresidentsReport}
\BIBentryALTinterwordspacing
``President's {C}ommission on {C}ritical {I}nfrastructure {P}rotection.
  {C}ritical {F}oundations: {P}rotecting {A}merica's {I}nfrastructures,'' 1997.
  [Online]. Available: \url{https://www.fas.org/sgp/library/pccip.pdf}
\BIBentrySTDinterwordspacing

\bibitem{Early1}
\BIBentryALTinterwordspacing
Y.~Y. Haimes, N.~C. Matalas, J.~H. Lambert, B.~A. Jackson, and J.~F.~R.
  Fellows, ``Reducing vulnerability of water supply systems to attack,''
  \emph{Journal of Infrastructure Systems}, vol.~4, no.~4, pp. 164--177, 1998.
  [Online]. Available:
  \url{http://dx.doi.org/10.1061/(ASCE)1076-0342(1998)4:4(164)}
\BIBentrySTDinterwordspacing

\bibitem{Early2}
\BIBentryALTinterwordspacing
M.~Bishop and E.~Goldman, ``The strategy and tactics of information warfare,''
  \emph{Contemporary Security Policy}, vol.~24, no.~1, pp. 113--139, 2003.
  [Online]. Available: \url{http://dx.doi.org/10.1080/13523260312331271839}
\BIBentrySTDinterwordspacing

\bibitem{Early3}
K.~Barnes and B.~Johnson, ``Introduction to {SCADA}: Protection and
  vulnerabilities,'' \emph{Technical Report INEEL/EXT-04-01710, Idaho National
  Engineering and Environmental Laboratory}, 2004.

\bibitem{Early4}
\BIBentryALTinterwordspacing
Y.~Y. Haimes, ``Risk of terrorism to cyber-physical and organizational-societal
  infrastructures,'' \emph{Public Works Management and Policy}, vol.~6, no.~4,
  pp. 231--240, 2002. [Online]. Available:
  \url{http://pwm.sagepub.com/content/6/4/231.abstract}
\BIBentrySTDinterwordspacing

\bibitem{Early5}
D.~Watts, ``Security and vulnerability in electric power systems,'' \emph{35th
  North American Power Symposium}, pp. 559--566, 2003.

\bibitem{Early6}
A.~S. Brown, ``{SCADA} vs. the hackers,'' \emph{Mechanical Engineering}, vol.
  124, no.~12, p.~37, 2002.

\bibitem{energysector}
J.~Eisenhauer, P.~Donnelly, M.~Ellis, and M.~O'Brien, ``Roadmap to secure
  control systems in the energy sector,'' \emph{Energetics Incorporated.
  Sponsored by the U.S. Department of Energy and the U.S. Department of
  Homeland Security}, January 2006.

\bibitem{watersector}
\BIBentryALTinterwordspacing
``Roadmap to secure control systems in the water sector,'' March 2006.
  [Online]. Available:
  \url{http://www.awwa.org/Portals/0/files/legreg/Security/SecurityRoadmap.pdf}
\BIBentrySTDinterwordspacing

\bibitem{chemicalsector}
\BIBentryALTinterwordspacing
``Roadmap to secure control systems in the chemical sector,'' September 2009.
  [Online]. Available:
  \url{https://scadahacker.com/library/Documents/Roadmaps/Roadmap\%20to\%20Secure\%20Control\%20Systems\%20in\%20the\%20Chemical\%20Sector.pdf}
\BIBentrySTDinterwordspacing

\bibitem{transportationsector}
\BIBentryALTinterwordspacing
``Roadmap to secure control systems in the transportation sector,'' August
  2012. [Online]. Available:
  \url{https://ics-cert.us-cert.gov/sites/default/files/ICSJWG-Archive/TransportationRoadmap20120831.pdf}
\BIBentrySTDinterwordspacing

\bibitem{cardenas1}
A.~A. Cardenas, S.~Amin, and S.~Sastry, ``Secure control: Towards survivable
  cyber-physical systems,'' in \emph{The 28th International Conference on
  Distributed Computing Systems Workshops}.\hskip 1em plus 0.5em minus
  0.4em\relax IEEE, 2008.

\bibitem{ponniah1}
\BIBentryALTinterwordspacing
J.~Ponniah, Y.-C. Hu, and P.~R. Kumar, ``A clean slate approach to secure
  wireless networking,'' \emph{Foundations and Trends in Networking}, vol.~9,
  no.~1, pp. 1--105, 2014. [Online]. Available:
  \url{http://dx.doi.org/10.1561/1300000037}
\BIBentrySTDinterwordspacing

\bibitem{cardenas2}
\BIBentryALTinterwordspacing
A.~A. Cardenas, S.~Amin, and S.~Sastry, ``Research challenges for the security
  of control systems.'' [Online]. Available:
  \url{https://www.usenix.org/legacy/events/hotsec08/tech/full\_papers/cardenas/cardenas\_html/hotsecHTML.html}
\BIBentrySTDinterwordspacing

\bibitem{PacketLoss1}
L.~Schenato, B.~Sinopoli, M.~Franceschetti, K.~Poolla, and S.~S. Sastry,
  ``Foundations of control and estimation over lossy networks,''
  \emph{Proceedings of the IEEE}, vol.~95, no.~1, pp. 163--187, 2007.

\bibitem{PacketLoss2}
B.~Sinopoli, L.~Schenato, M.~Franceschetti, K.~Poolla, and S.~Sastry, ``Optimal
  linear {LQG} control over lossy networks without packet acknowledgment,''
  \emph{Asian Journal of Control}, vol.~10, no.~1, pp. 3--13, 2008.

\bibitem{PacketLoss3}
V.~Gupta, B.~Hassibi, and R.~M. Murray, ``Optimal {LQG} control across
  packet-dropping links,'' \emph{Systems \& Control Letters}, vol.~56, no.~6,
  pp. 439--446, 2007.

\bibitem{PacketLoss4}
N.~Elia and S.~K. Mitter, ``Stabilization of linear systems with limited
  information,'' \emph{IEEE Transactions on Automatic Control}, vol.~46, no.~9,
  pp. 1384--1400, 2001.

\bibitem{PacketLoss5}
X.~Liu and A.~Goldsmith, ``Kalman filtering with partial observation losses,''
  in \emph{43rd IEEE Conference on Decision and Control}, vol.~4.\hskip 1em
  plus 0.5em minus 0.4em\relax IEEE, 2004.

\bibitem{amin2009safe}
S.~Amin, A.~A. C{\'a}rdenas, and S.~S. Sastry, ``Safe and secure networked
  control systems under denial-of-service attacks,'' in \emph{Hybrid Systems:
  Computation and Control}.\hskip 1em plus 0.5em minus 0.4em\relax Springer,
  2009.

\bibitem{Mo_sinopoli_deception}
Y.~Mo, E.~Garone, A.~Casavola, and B.~Sinopoli, ``False data injection attacks
  against state estimation in wireless sensor networks,'' in \emph{49th IEEE
  Conference on Decision and Control (CDC)}.\hskip 1em plus 0.5em minus
  0.4em\relax IEEE, 2010.

\bibitem{vgupta}
C.-Z. Bai, F.~Pasqualetti, and V.~Gupta, ``Security in stochastic control
  systems: Fundamental limitations and performance bounds,'' \emph{American
  Control Conference}, pp. 195–--200, 2015.

\bibitem{diggavi1}
H.~Fawzi, P.~Tabuada, and S.~Diggavi, ``Secure state-estimation for dynamical
  systems under active adversaries,'' in \emph{49th Annual Allerton Conference
  on Communication, Control, and Computing (Allerton)}.\hskip 1em plus 0.5em
  minus 0.4em\relax IEEE, 2011.

\bibitem{sinopoli_replay}
Y.~Mo and B.~Sinopoli, ``Secure control against replay attacks,'' in \emph{47th
  Annual Allerton Conference on Communication, Control, and Computing}, Sept
  2009.

\bibitem{sinopoli_replay2}
Y.~Mo, R.~Chabukswar, and B.~Sinopoli, ``Detecting integrity attacks on {SCADA}
  systems,'' \emph{IEEE Transactions on Control Systems Technology}, vol.~22,
  no.~4, pp. 1396--1407, 2014.

\bibitem{physical_watermarking1}
Y.~Mo, S.~Weerakkody, and B.~Sinopoli, ``Physical authentication of control
  systems: Designing watermarked control inputs to detect counterfeit sensor
  outputs,'' \emph{IEEE Control Systems}, vol.~35, no.~1, pp. 93--109, Feb
  2015.

\bibitem{physical_watermarking2}
S.~Weerakkody, Y.~Mo, and B.~Sinopoli, ``Detecting integrity attacks on control
  systems using robust physical watermarking,'' in \emph{53rd IEEE Conference
  on Decision and Control}, Dec 2014, pp. 3757--3764.

\bibitem{pappasReplay}
F.~Miao, M.~Pajic, and G.~J. Pappas, ``Stochastic game approach for replay
  attack detection,'' in \emph{2013 IEEE 52nd Annual Conference on Decision and
  Control (CDC)}.\hskip 1em plus 0.5em minus 0.4em\relax IEEE, 2013.

\bibitem{revealingStealthyAttacks}
A.~Teixeira, I.~Shames, H.~Sandberg, and K.~Johansson, ``Revealing stealthy
  attacks in control systems,'' in \emph{50th Annual Allerton Conference on
  Communication, Control, and Computing (Allerton)}, Oct 2012.

\bibitem{shield}
\BIBentryALTinterwordspacing
S.~Gisdakis, T.~Giannetsos, and P.~Papadimitratos, ``Shield: A data
  verification framework for participatory sensing systems,'' in
  \emph{Proceedings of the 8th ACM Conference on Security \& Privacy in
  Wireless and Mobile Networks}, ser. WiSec '15.\hskip 1em plus 0.5em minus
  0.4em\relax New York, NY, USA: ACM, 2015. [Online]. Available:
  \url{http://doi.acm.org/10.1145/2766498.2766503}
\BIBentrySTDinterwordspacing

\bibitem{ponniah4}
J.~Ponniah, Y.-C. Hu, and P.~Kumar, ``A system-theoretic clean slate approach
  to provably secure ad hoc wireless networking,'' \emph{IEEE Transactions on
  Control of Network Systems, To Appear}.

\bibitem{ihong}
I.-H. Hou, V.~Borkar, and P.~Kumar, ``A theory of qos for wireless,'' in
  \emph{IEEE INFOCOM}.\hskip 1em plus 0.5em minus 0.4em\relax IEEE, 2009.

\bibitem{Kalman}
R.~Kalman, ``On the general theory of control systems,'' \emph{IRE Transactions
  on Automatic Control}, vol.~4, no.~3, pp. 110--110, Dec 1959.

\bibitem{KumarVaraiya}
P.~R. Kumar and P.~Varaiya, \emph{Stochastic Systems: Estimation,
  Identification and Adaptive Control}.\hskip 1em plus 0.5em minus 0.4em\relax
  Upper Saddle River, NJ, USA: Prentice-Hall, Inc., 1986.

\bibitem{MST}
T.~L. Lai and C.~Z. Wei, ``Least squares estimates in stochastic regression
  models with applications to identification and control of dynamic systems,''
  \emph{The Annals of Statistics}, pp. 154--166, 1982.

\bibitem{ast70}
K.~J. {\rm \AA strom}, \emph{Introduction to Stochastic Control}.\hskip 1em
  plus 0.5em minus 0.4em\relax New York: Academic Press, 1970, 1970.

\bibitem{internal_model}
B.~A. Francis and W.~M. Wonham, ``The internal model principle of control
  theory,'' \emph{Automatica}, vol.~12, no.~5, pp. 457--465, 1976.

\bibitem{mehra_71}
R.~K. Mehra and J.~Peschon, ``An innovations approach to fault detection and
  diagnosis in dynamic systems,'' \emph{Automatica}, vol.~7, no.~5, pp.
  637--640, 1971.

\bibitem{wald_sequential}
A.~Wald, ``Sequential tests of statistical hypotheses,'' \emph{The Annals of
  Mathematical Statistics}, vol.~16, no.~2, pp. 117--186, 1945.

\bibitem{anderson_book}
T.~W. Anderson, \emph{An introduction to multivariate statistical
  analysis}.\hskip 1em plus 0.5em minus 0.4em\relax Wiley New York, 1958,
  vol.~2.

\end{thebibliography}

\begin{IEEEbiography}
[{\includegraphics[width=1in,height=1.25in,clip,keepaspectratio]{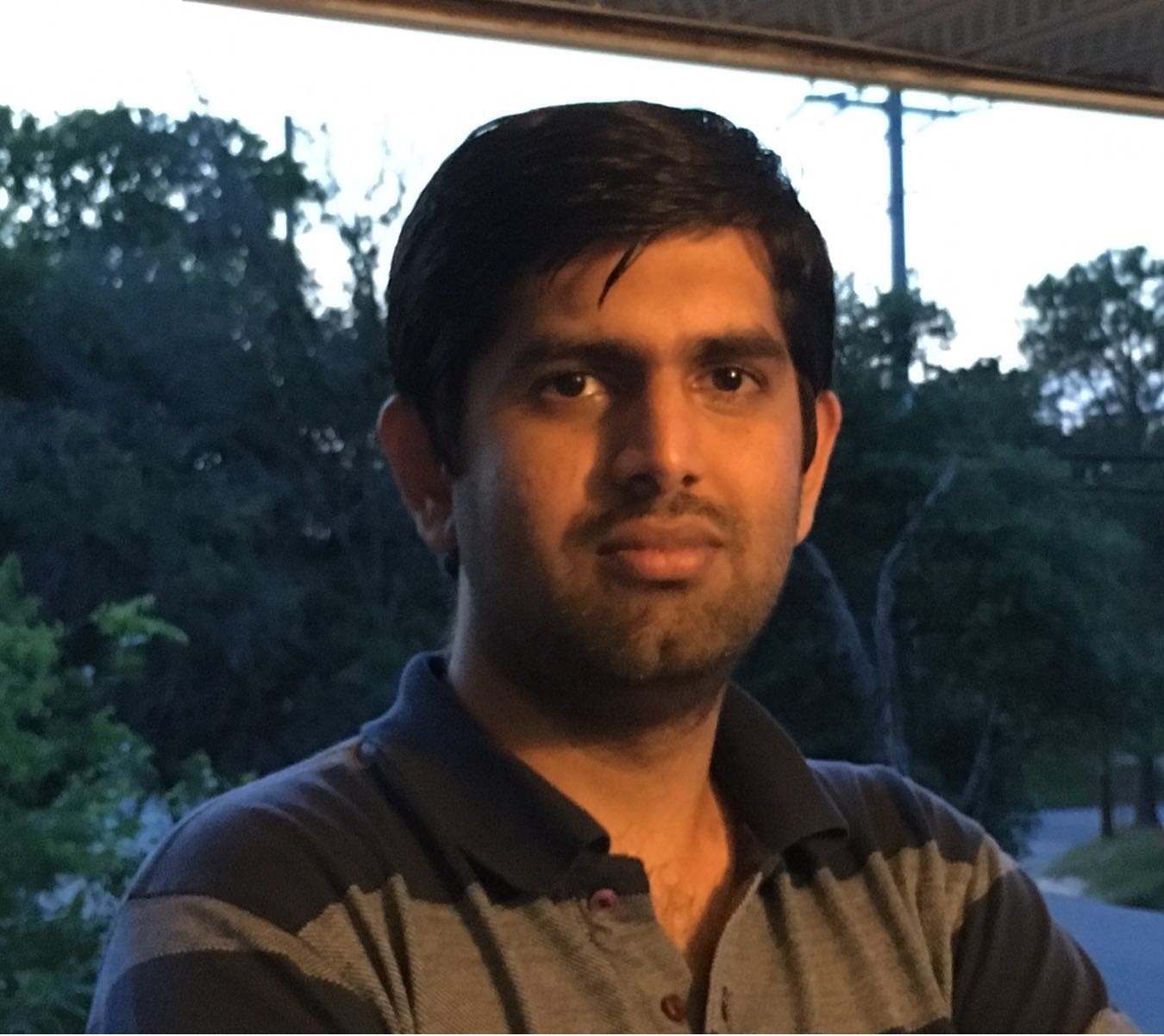}}]{Bharadwaj Satchidanandan} is a doctoral student in the Electrical and Computer Engineering department at Texas A\&M University, College Station, TX. 
Prior to this, he completed his Master's from Indian Institute of Technology Madras, where he worked on Wireless Communications. 
Between May '15 and August '15, he interned at Intel Labs, Santa Clara, CA, where he worked on interference cancellation algorithms for next-generation wireless networks.
His research interests include cyberphysical systems, power systems, security, database privacy, communications, control, and signal processing.
\end{IEEEbiography}

\begin{IEEEbiography}
[{\includegraphics[width=1in,height=1.25in,clip,keepaspectratio]{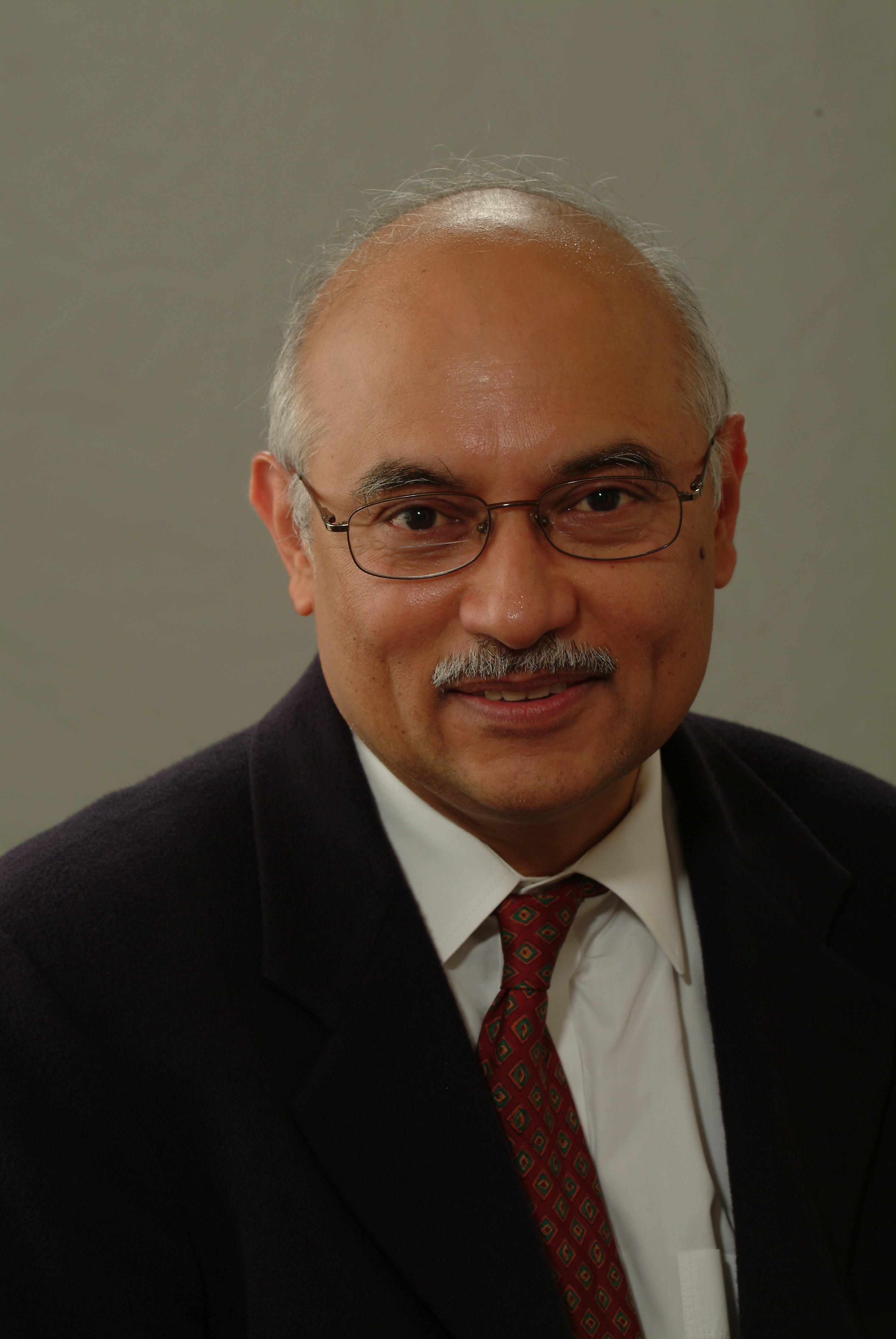}}]{P. R. Kumar} B.  Tech.  (IIT  Madras,  ‘73),  D.Sc.(Washington  University,  St.  Louis,  ‘77),  was  a faculty member at UMBC (1977-84) and Univ. of Illinois,  Urbana-Champaign  (1985-2011).  He  is currently  at  Texas  A\&M  University.  His  current research is focused on stochastic systems, energy systems, wireless networks, security, automated transportation, and cyberphysical systems. He is a member of the US National Academy of Engineering and The World Academy of Sciences. He was awarded a Doctor Honoris Causa by  ETH,  Zurich.  He  has  received  the  IEEE  Field  Award  for  Control Systems, the Donald P. Eckman Award of the AACC, Fred W. Ellersick Prize of the IEEE Communications Society, the Outstanding Contribution Award of ACM SIGMOBILE, the Infocom Achievement Award, and the SIGMOBILE  Test-of-Time  Paper  Award.  He  is  a  Fellow  of  IEEE  and ACM  Fellow.  He  was  Leader  of  the  Guest  Chair  Professor  Group  on Wireless  Communication  and  Networking  at  Tsinghua  University,  is  a D.  J.  Gandhi  Distinguished  Visiting  Professor  at  IIT  Bombay,  and  an Honorary Professor at IIT Hyderabad. He was awarded the Distinguished Alumnus Award from IIT Madras, the Alumni Achievement Award from Washington Univ., and the Daniel Drucker Eminent Faculty Award from the College of Engineering at the Univ. of Illinois.
\end{IEEEbiography}

%

\end{document}